\newcommand{\NN}{\ensuremath{\mathbb{N}}}
\newcommand{\RR}{\ensuremath{\mathbb{R}}}
\DeclareMathOperator*{\argmax}{arg\,max}
\DeclareMathOperator*{\argmin}{arg\,min}
\newcommand{\WWL}{weakly well-layered\xspace}
\newcommand{\WWLcapital}{Weakly Well-Layered\xspace}
\newcommand{\PLS}{\textup{\textsf{PLS}}\xspace}
\newcommand{\TFNP}{\textup{\textsf{TFNP}}\xspace}
\newcommand{\NP}{\textup{\textsf{NP}}\xspace}
\newcommand{\EFX}{\textup{\textsc{Identical-EFX}}\xspace}
\newcommand{\KNESER}{\textup{\textsc{Kneser}}\xspace}
\newcommand{\FLIP}{\textup{\textsc{Flip}}\xspace}
\newcommand{\poly}{\ensuremath{\mathrm{poly}}}
\theoremstyle{definition}
\newtheorem{definition}{Definition}[section]
\newtheorem{example}{Example}
\theoremstyle{plain}
\newtheorem{theorem}{Theorem}[section]
\newtheorem{lemma}{Lemma}
\newtheorem{corollary}{Corollary}
\title{The Frontier of Intractability for EFX with Two Agents}
\author{
\begin{tabular}{c c c}
& \\ \textbf{Paul W. Goldberg} & \textbf{Kasper H\o gh} & \textbf{Alexandros Hollender}\\
\small{University of Oxford} & \small{Aarhus University} & \small{EPFL}\\
\href{mailto:paul.goldberg@cs.ox.ac.uk}{\small{\texttt{paul.goldberg@cs.ox.ac.uk}}} & \href{mailto:kh@cs.au.dk}{\small{\texttt{kh@cs.au.dk}}} & \href{mailto:alexandros.hollender@epfl.ch}{\small{\texttt{alexandros.hollender@epfl.ch}}}
\end{tabular}
}
\date{}
\begin{document}

\maketitle

\begin{abstract}
We consider the problem of sharing a set of indivisible goods among agents in a fair manner, namely such that the allocation is envy-free up to any good (EFX). We focus on the problem of computing an EFX allocation in the two-agent case and characterize the computational complexity of the problem for most well-known valuation classes. We present a simple greedy algorithm that solves the problem when the agent valuations are \WWL, a class which contains gross substitutes and budget-additive valuations. For the next largest valuation class we prove a negative result: the problem is \PLS-complete for submodular valuations. All of our results also hold for the setting where there are many agents with identical valuations.
\end{abstract}

\section{Introduction}

The field of fair division studies the following fundamental question: given a set of resources, how should we divide them among a set of agents (who have subjective preferences over those resources) in a fair way? This question arises naturally in many settings, such as divorce settlement, division of inheritance, or dissolution of a business partnership, to name just a few. Although the motivation for studying this question is perhaps almost as old as humanity itself, the first mathematical investigation of the question dates back to the work of Banach, Knaster and Steinhaus \citep{Steinhaus1948-fair-division,Steinhaus1949-pragmatique}.

Of course, in order to study fair division problems, one has to define what exactly is meant by a \emph{fair} division. Different fairness notions have been proposed to formalize this. Banach, Knaster and Steinhaus considered a notion which is known today as \emph{proportionality}: every agent believes that it obtained at least a fraction $1/n$ of the total value available, where $n$ is the number of agents. A generally\footnote{As long as agents' valuations are subadditive, every envy-free division also satisfies proportionality.} stronger notion, and one which seems more adapted to the motivating examples we mentioned above, is that of \emph{envy-freeness} \citep{gamow1958puzzle,Foley1966-thesis-EF,Varian1974-EF}. A division of the resources is said to be envy-free, if no agent is envious, i.e., no agent values the bundle of resources obtained by some other agent strictly more than what it obtained itself.

As our motivating examples already suggest, the case with few agents -- in fact, even just with two agents -- is very relevant in practice. When the resources are divisible, such as for example money, water, oil, or time, the fair division problem with two agents admits a very simple and elegant solution: the cut-and-choose algorithm, which already appears in the Book of Genesis. As its name suggests, in the cut-and-choose algorithm one agent cuts the resources in half (according to its own valuation), and the other agent chooses its preferred piece, leaving the other piece to the first agent. It is easy to check that this guarantees envy-freeness, among other things. The case of divisible resources, which is usually called \emph{cake cutting}, has been extensively studied for more than two agents. One of the main objectives in that line of research can be summarized as follows: come up with approaches that achieve similar guarantees to cut-and-choose, but for more than two agents. This has been partially successful, and notable results include the proof of the existence of an envy-free allocation for any number of agents \citep{Stromquist80-existence,Woodall80-existence,Su99-rental-harmony}, as well as a finite, albeit very inefficient, protocol for computing one \citep{AzizM16-EF-n-agents}.

In many cases, however, assuming that the resources are divisible might be too strong an assumption. Indeed, some resources are inherently \emph{indivisible}, such as a house, a car, or a company. Sometimes these resources can be made divisible by sharing them over time, for example, one agent can use the car over the weekend and the other agent on weekdays. But, in general, and in particular when agents are not on friendly terms with each other, as one would expect to often be the case for divorce settlements, this is not really an option.

Indivisible resources make the problem of finding a fair division more challenging. First of all, in contrast to the divisible setting, envy-free allocations are no longer guaranteed to exist. Indeed, this is easy to see even with just two agents and a single (indivisible) good that both agents would like to have. No matter who is given the good, the other agent will envy them. In order to address this issue of non-existence of a solution, various relaxations of envy-freeness have been proposed and studied in the literature. The strongest such relaxation, namely the one which seems closest to perfect envy-freeness, is called \emph{envy-freeness up to any good} and is denoted by EFX \citep{CaragiannisKMPSW19-MNW,GourvesMT14-near-EF}.
An allocation is EFX if for all agents $i$ and $j$, agent $i$ does not envy agent $j$, after removal of \emph{any} single good from agent $j$'s bundle.
In other words, an allocation is \emph{not} EFX, if and only if there exist agents $i$ and $j$, and a good in $j$'s bundle, so that $i$ envies $j$'s bundle even after removal of that good.

For this relaxed notion of envy-freeness, it is possible to recover existence, at least in some cases. An EFX allocation is guaranteed to exist for two agents with any monotone valuations \citep{PlautR20-EFX}, and for three agents if we restrict the valuations to be additive \citep{ChaudhuryGM20-EFX-three}. It is currently unknown whether it always exists for four or more agents, even just for additive valuations.

Surprisingly, proving the existence of EFX allocations for two agents is non-trivial. In order to use the cut-and-choose approach, we need to be able to ``cut in half''. In the divisible setting, this is straightforward. But, in the indivisible setting, we need to ``cut in half in the EFX sense,'' i.e., divide the goods into two bundles such that the first agent is EFX with either bundle. In other words, we first need to show the existence of EFX allocations for two identical agents, namely two agents who share the same valuation function, which is not a trivial task.

\citet{PlautR20-EFX} provided a solution to this problem by introducing the \emph{leximin++ solution}. Given a monotone valuation function, they defined a total ordering over all allocations called the leximin++ ordering. They proved that for two identical agents, the leximin++ solution, namely the global maximum with respect to the leximin++ ordering, must be an EFX allocation. As mentioned above, using the cut-and-choose algorithm, this shows the existence of EFX allocations for two, possibly different, agents. Unfortunately, computing the leximin++ solution is computationally intractable\footnote{Computing the leximin++ solution is \NP-hard, even for two identical agents with additive valuations. This can be shown by a reduction from the \textsc{Partition} problem (see \citep[Footnote 7]{PlautR20-EFX} and note that their argument, which they use for leximin, also applies to leximin++).} and so, while this argument proves the existence of EFX allocations, it does not yield an efficient algorithm.

Nevertheless, for two agents with \emph{additive} valuations, \citet{PlautR20-EFX} provided a polynomial-time algorithm based on a modification of the Envy-Cycle elimination algorithm of \citet{LiptonMMS04-indivisible}. They also provided a lower bound for the problem in the more general class of submodular valuations, but not in terms of computational complexity (i.e., not in the standard Turing machine model). Namely, they proved that for two identical agents with submodular valuations computing an EFX allocation requires an exponential number of queries in the query complexity model.

Their work naturally raises the following two questions about the problem of computing an EFX allocation for two agents:
\begin{enumerate}
    \item What is the \emph{computational} complexity of the problem for submodular valuations?
    \item What is the computational complexity of the problem for well-known valuation classes lying between additive and submodular,\footnote{In particular, \citet[Section 7]{PlautR20-EFX} propose studying the complexity of fair division problems with respect to the hierarchy of complement-free valuations ($\textsl{additive}\subseteq \textsl{OXS}\subseteq \textsl{gross substitutes}\subseteq\textsl{submodular}\subseteq\textsl{XOS}\subseteq\textsl{subadditive}$) introduced by \citet{LehmannLN06-complement-free-hierarchy}.} such as gross substitutes, OXS, and budget-additive?
\end{enumerate}

Note that it does not make sense to study the query complexity for additive valuations, since a polynomial number of queries is sufficient to reconstruct the whole valuation functions (and the amount of computation then needed to determine a solution is not measured in the query complexity). However, it does make sense to study the computational complexity of the problem for submodular valuations, as well as other classes beyond additive. The query lower bound by Plaut and Roughgarden essentially says that many queries are needed in order to gather enough information about the submodular valuation function to be able to construct an EFX allocation. But it does not say anything about the \emph{computational} hardness of finding an EFX allocation. Their lower bound does not exclude the possibility of a polynomial-time algorithm for submodular valuations in the standard Turing machine model. Studying the problem in the computational complexity model allows us to investigate how hard it is to solve when the valuation functions are given in some succinct representation, e.g., as a few lines of code, or any other form that allows for efficient evaluation.

\paragraph{\bf Our contribution.}
We answer both of the aforementioned questions:
\begin{enumerate}
    \item For submodular valuations, we prove that the problem is \PLS-complete in the standard Turing machine model, even with two identical agents.
    \item We present a simple greedy algorithm that finds an EFX allocation in polynomial time for two agents with \emph{\WWL} valuations, a class of valuation functions that we define in this paper and which contains all well-known strict subclasses of submodular, such as gross substitutes (and thus also OXS) and budget-additive.\footnote{The class of \WWL valuations also contains the class of \emph{cancelable} valuations which have been recently studied in fair division \citep{BergerCFF22-almost-full-EFX,AkramiACGMM23-EFX-three-simpler,AmanatidisBLLR23-round-robin-beyond}.}
\end{enumerate}
Together, these two results resolve the computational complexity of the problem for all valuation classes in the standard complement-free hierarchy ($\textsl{additive}\subseteq \textsl{OXS}\subseteq \textsl{gross substitutes}\subseteq\textsl{submodular}\subseteq\textsl{XOS}\subseteq\textsl{subadditive}$) introduced by \citet{LehmannLN06-complement-free-hierarchy}. Furthermore, just like in the work of \citet{PlautR20-EFX}, our negative and positive results also hold for any number of \emph{identical} agents.

Regarding the \PLS-completeness result, the membership in \PLS is easy to show using the leximin++ ordering of \citet{PlautR20-EFX}. The \PLS-hardness is more challenging. The first step of our hardness reduction is essentially identical to the first step in the corresponding query lower bound of \citet{PlautR20-EFX}: a reduction from a local optimization problem on the Kneser graph to the problem of finding an EFX allocation. The second step of the reduction is our main technical contribution: we prove that finding a local optimum on a Kneser graph is \PLS-hard\footnote{We note that proving a tight computational complexity lower bound is more challenging than proving a query lower bound, because we have to reduce from problems with more structure. Indeed, the exponential query lower bound for the Kneser problem (and thus also for the EFX problem) can easily be obtained as a byproduct of our reduction.}, which might be of independent interest.

\paragraph{\bf Further related work.}
The existence and computation of EFX allocations has been studied in various different settings, such as for restricted versions of valuation classes \citep{AmanatidisBFHV21-EFX,BabaioffEF-dichotomous}, when some items can be discarded \citep{CaragiannisGH19-donating,ChaudhuryKMS21-charity,BergerCFF22-almost-full-EFX,ChaudhuryGMMM21-rainbow}, or when valuations are drawn randomly from a distribution \citep{ManurangsiS21-asymptotic-EFX}.

A weaker relaxation of envy-freeness is \emph{envy-freeness up to one good} (EF1) \citep{Budish11,LiptonMMS04-indivisible}. It can be computed efficiently for any number of agents with monotone valuations using the Envy-Cycle elimination algorithm \citep{LiptonMMS04-indivisible}. If one is also interested in economic efficiency, then it is possible to obtain an allocation that is both EF1 and Pareto-optimal in pseudopolynomial time for additive valuations \citep{BarmanKV18-fair-efficient}. For more details about fair division of indivisible items, we refer to the recent survey by \citet{AmanatidisABFLMVW23-survey}.

\paragraph{\bf Outline.} We begin with \cref{sec:prelims} where we formally define the problem and solution concept, as well as some standard valuation classes of interest. In \cref{sec:algo} we introduce \emph{\WWL} valuation functions, and present our simple greedy algorithm for computing EFX allocations. Finally, in \cref{sec:pls} we prove our main technical result, the \PLS-completeness for submodular valuations.

\section{Preliminaries}\label{sec:prelims}

We consider the problem of discrete fair division where an instance consists of a set of agents $N$, a set of goods $M$, and for every agent $i\in N$ a valuation function $v_i\colon 2^{M}\rightarrow\RR_{\geq 0}$ assigning values to bundles of goods. All valuation functions will be assumed to be \emph{monotone}, meaning that for any subsets $S\subseteq T\subseteq M$ it holds that $v(S)\leq v(T)$, and \emph{normalized}, i.e., $v(\emptyset) = 0$.

We now introduce the different types of valuation functions that are of interest to us. A valuation $v\colon 2^M\rightarrow\RR_{\geq 0}$ is \emph{additive} if $v(S)=\sum_{g\in S}v(\{g\})$ for every $S\subseteq M$. The hardness result we present in \cref{sec:pls} holds for \emph{submodular} valuations. These are valuations that satisfy the following diminishing returns condition that whenever $S\subseteq T$ and $x\notin T$ it holds that $v(S\cup\{x\})-v(S)\geq v(T\cup\{x\})-v(T)$. 

Next, for our results in the positive direction, we introduce the classes of \emph{gross substitutes} and \emph{budget-additive} valuations, both contained in the class of submodular valuations. 
Before defining gross substitutes valuations, we have to introduce some notation. For a price vector $p\in \RR^{m}$ on the set of goods, where $m=|M|$, the function $v_p$ is defined by $v_p(S)=v(S)-\sum_{g\in S}p_g$ for any subset $S\subseteq M$, and the demand set is $D(v,p) = \argmax_{S\subseteq M}v_p(S)$. 
A valuation $v$ is \emph{gross substitutes} if for any price vectors $p,p'\in\RR^m$ with $p\leq p'$ (meaning that $p_g\leq p_g'$ for all $g\in M$), it holds that if $S\in D(v,p)$, then there exists a demanded set $S'\in D(v,p')$ such that $\{g\in S\colon p_g = p_g'\}\subseteq S'.$
That is to say, if some good $g$ is demanded at prices $p$ and the prices of some \emph{other} goods increase, then $g$ will still be demanded. These valuations have various nice properties, for instance guaranteeing existence of Walrasian equilibria \citep{Gul1999}.
Lastly, a valuation $v$ is \emph{budget-additive} if it is of the form $v(S)=\min\{B,\sum_{g\in S}w_g\}$ for reals $B,w_1,\dots, w_m\geq 0$. \citep{LehmannLN06-complement-free-hierarchy} show that a budget-additive valuation need not satisfy the gross substitutes condition. See \cref{fig:hierarchy} for the relationship between the valuation classes.

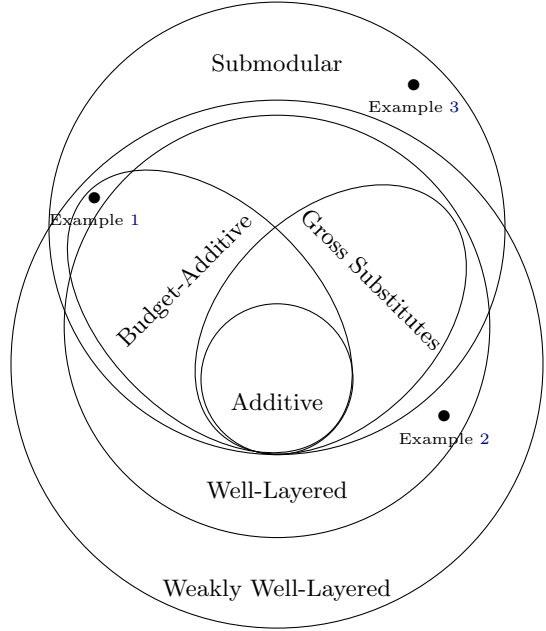
\begin{figure}
\centering
\begin{tikzpicture}
\draw (0,0.5) circle (3cm);
\node at (0,2.7) {\footnotesize{Submodular}};

\draw (0,-1.5) circle (1cm);
\node at (0,-1.8) {\footnotesize{Additive}};

\draw[rotate = 45] (0,-1) ellipse (2.1cm and 1.4cm);
\node[rotate=-45] at (1.25,-0.17) {\footnotesize{Gross Substitutes}};

\draw[rotate = -45] (-0.2,-1.05) ellipse (2.25cm and 1.4cm);
\node[rotate=45] at (-1.2,-0.2) {\footnotesize{Budget-Additive}};

\draw (0,-0.8) circle (2.8cm);
\node at (0,-3) {\footnotesize{Well-Layered}};

\draw (0,-1.3) circle (3.5cm);
\node at (0,-4.3) {\footnotesize{Weakly Well-Layered}};

\node at (-2.4,0.9){$\bullet$};
\node at (-2.4,0.6){\tiny Example \ref{ex:baval}};

\node at (2.2,-2.0){$\bullet$};
\node at (2.2,-2.3){\tiny Example \ref{ex:WL-not-submo}};

\node at (1.8,2.4){$\bullet$};
\node at (1.8,2.1){\tiny Example \ref{ex:algo-fail-submodular}};

\end{tikzpicture}
\caption{Inclusions of valuation classes} \label{fig:hierarchy}
\end{figure}

\paragraph{\bf Envy-freeness up to any good (EFX).}
The goal of fair division is to find an allocation of the goods to the agents (i.e., a partitioning $M = X_1\sqcup \cdots\sqcup X_n$) satisfying some notion of fairness. One might hope for an \emph{envy-free} division in which every agent prefers his own bundle over the bundle of any other agent, that is, $v_i(X_i)\geq v_i(X_j)$ for all $i,j\in N$. Such a division need not exist, however, as can be seen in the case where one has to divide one good among two agents, as already mentioned in the introduction. Therefore various weaker notions of fairness have been studied. In this paper we consider the notion of \emph{envy-freeness up to any good} (EFX) introduced by \citet{CaragiannisKMPSW19-MNW}, and before that by \citet{GourvesMT14-near-EF} under a different name. An allocation $(X_1,\dots, X_n)$ is said to be EFX if for any $i,j\in N$ and any $g\in X_j$ it holds that $v_i(X_i)\geq v_i(X_j\setminus \{g\})$.

\section{Polynomial-time Algorithm for \WWLcapital Valuations}\label{sec:algo}

In this section we present our positive result, namely the polynomial-time algorithm for computing an EFX allocation for two agents with \WWL valuations. To be more precise, our algorithm works for any number of agents that all share the same \WWL valuation function. As a result, using cut-and-choose it can then be used to solve the problem with two possibly \emph{different} agents. We begin with the definition of this new class of valuations, and then present the algorithm and prove its correctness.

\subsection{\WWLcapital Valuations}

We introduce a property of valuation functions and then situate this with respect to well-known classes of valuation functions in the next section.

\begin{definition}\label{def:WWL}
A valuation function $v\colon 2^M\rightarrow\RR_{\geq 0}$ is said to be \emph{weakly well-layered} if for any $M'\subseteq M$ the sets $S_0,S_1,S_2,\dots$ obtained by the greedy algorithm (that is, $S_0 = \emptyset$ and $S_i = S_{i-1}\cup \{x_i\}$ where $x_i \in\arg\max_{x\in M'\setminus S_{i-1}}v(S_{i-1}\cup \{x\})$  for $1\leq i\leq |M'|$) are optimal in the sense that $v(S_i ) = \max_{S\subseteq M'\colon |S|=i}v(S)$ for all $i$.
\end{definition}

We can reformulate this definition as follows: a valuation function $v$ is \WWL if and only if, for all $M' \subseteq M$ and all $i$, the optimization problem
\begin{equation}\label{eq:opt-WWL}
\begin{split}
\max \quad &v(S) \\
\text{ s.t. } \quad  
&S \subseteq M' \\
&|S| \leq i
\end{split}
\end{equation}
can be solved by using the natural greedy algorithm. Note that since we only consider monotone valuations, we can also use the condition $|S| = i$ instead of $|S| \leq i$.

The reformulation of the definition in terms of the optimization problem \eqref{eq:opt-WWL} is reminiscent of one of the alternative definitions of a matroid. Consider the optimization problem
\begin{equation}\label{eq:opt-matroid}
\begin{split}
\max \quad &v(S) \\
\text{ s.t. } \quad  
&S \in \mathcal{F}
\end{split}
\end{equation}
where $v\colon 2^M\rightarrow\RR_{\geq 0}$ is a valuation function and $\mathcal{F}$ is an independence system on $M$. Then, it is well-known that $\mathcal{F}$ is a matroid, if and only if, for all additive valuations $v$, the optimization problem \eqref{eq:opt-matroid} can be solved by the natural greedy algorithm \citep{Rado1957-independence,Gale1968-matroid,Edmonds1971-matroid-greedy}. In other words, the class of set systems (namely, matroids) is defined by fixing a class of valuations (namely, additive). The alternative definition of \WWL valuations given in \eqref{eq:opt-WWL} can be viewed as doing the opposite: the class of valuations (namely, \WWL) is defined by fixing a class of set systems (namely, all uniform matroids on subsets $M' \subseteq M$, or, more formally, $\mathcal{F} = \{S \subseteq M': |S| \leq i\}$ for all $M' \subseteq M$ and all $i$).

\subsection{Relationship to other valuation classes}

\paragraph{\bf Gross substitutes.}
We begin by showing that any gross substitutes valuation is \WWL. In particular, this also implies that OXS valuations, which are a special case of gross substitutes, are also \WWL. \citet{Leme2017} proved that gross substitutes valuation functions satisfy the stronger condition of being \emph{well-layered}, that is, for any $p\in\RR^m$ it holds that if $S_0,S_1,S_2,\dots$ is constructed greedily with respect to the valuation $v_p$, where $v_p(S):=v(S)-\sum_{g\in S}p_g$, then $S_i$ satisfies that $S_i \in\arg\max_{S\subseteq M\colon |S|=i}v_p(S)$. 

\begin{lemma}
If $v\colon 2^M\rightarrow\RR_{\geq 0}$ is well-layered, then it is also \WWL. In particular, gross substitutes valuations are \WWL.
\end{lemma}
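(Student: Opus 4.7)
The plan is to reduce to the full-set greedy computation on $v_p$ for a carefully chosen price vector $p$. Given any $M' \subseteq M$, define prices by setting $p_g = 0$ for $g \in M'$ and $p_g = P$ for $g \notin M'$, where $P$ is chosen strictly larger than $v(M)$. The key observation is that the marginal contribution of $x$ to $v_p$, namely $v(S \cup \{x\}) - v(S) - p_x$, is nonnegative for $x \in M'$ (by monotonicity) and strictly negative for $x \notin M'$ (since $v(S \cup \{x\}) - v(S) \le v(M) < P$). Consequently, during the first $|M'|$ steps, any run of the greedy algorithm for $v_p$ on $M$ must pick only elements of $M'$, and within $M'$ it maximises exactly the same marginal $v(S_{i-1} \cup \{x\}) - v(S_{i-1})$ as the greedy algorithm for $v$ on $M'$.

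Next I would verify that the match goes both ways, so that an arbitrary greedy run for $v$ on $M'$ can be realised as the first $|M'|$ steps of some greedy run for $v_p$ on $M$ (taking care of ties). Given a greedy sequence $\emptyset = S_0 \subseteq S_1 \subseteq \cdots \subseteq S_{|M'|} = M'$ for $v$ on $M'$, each $x_i$ lies in $\arg\max_{x \in M \setminus S_{i-1}} v_p(S_{i-1} \cup \{x\})$ because its $v_p$-marginal is nonnegative and dominates every element of $M \setminus M'$. Then extend this sequence arbitrarily past step $|M'|$ using any greedy rule, producing a valid $v_p$-greedy sequence on $M$.

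With this extension in place, I would invoke the well-layered hypothesis on $v_p$ over $M$: for each $i \le |M'|$, the set $S_i$ satisfies $v_p(S_i) \ge v_p(T)$ for every $T \subseteq M$ with $|T| = i$. Since $S_i \subseteq M'$ implies $v_p(S_i) = v(S_i)$, and any competitor $T \subseteq M'$ with $|T| = i$ likewise satisfies $v_p(T) = v(T)$, it follows that $v(S_i) \ge v(T)$ for all such $T$, which is exactly the weakly well-layered condition for $M'$. The ``in particular'' part then follows immediately from the cited result of \citet{Leme2017} that gross substitutes valuations are well-layered.

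I don't anticipate any substantial obstacle: the only delicate point is handling ties so that every weakly-well-layered-style greedy run for $v$ on $M'$ is witnessed by some well-layered-style greedy run for $v_p$ on $M$, but the strict separation between the in-$M'$ and out-of-$M'$ marginals (ensured by taking $P > v(M)$) makes this straightforward.
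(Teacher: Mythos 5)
Your proposal is correct and follows essentially the same route as the paper's proof: pricing the goods outside $M'$ above $v(M)$ so that any greedy run for $v$ on $M'$ is realised as a greedy run for $v_p$ on $M$, then transferring optimality from the well-layered condition. The paper's argument is identical in substance, merely less explicit about the tie-handling that you spell out.
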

\begin{proof}
Assume that $v\colon 2^M\rightarrow\RR_{\geq 0}$ is well-layered and let $M'\subseteq M$. Assume that the sequence $S_0,S_1,S_2,\dots$ is constructed via the greedy algorithm: that is $S_0 = \emptyset$ and $S_i = S_{i-1}\cup\{x_i\}$ where $x_i\in\argmax_{x\in M' \setminus S_{i-1}}v(S_{i-1}\cup\{x\})$ for $1 \leq i \leq |M'|$. We have to show that $v(S_i)=\max_{S\subseteq M'\colon |S|=i}v(S)$.

In order to exploit the assumption that $v$ is well-layered, we introduce a price vector $p\in\RR^m$ given by

\begin{align*}
    p_g = 
    \begin{cases}
    0 & g\in M' \\ 
    v(M)+1 & g\notin M'
    \end{cases}
\end{align*}

One sees that the sequence $S_0,S_1,S_2,\dots$ can occur via the greedy algorithm for the valuation $v_p$, because goods not in $M'$ cannot be chosen as their prices are too high. As $v$ is well-layered, we have that $v_p(S_i)=\max_{S\subseteq M\colon |S|=i}v_p(S)$. As $p_g=0$ for all $g\in M'$, this implies that $v(S_i)=\max_{S\subseteq M' \colon |S|=i}v(S)$. We conclude that $v$ is \WWL.
\end{proof}

\paragraph{\bf Closure properties and budget-additive valuations.}

We note that the class of \WWL valuations is closed under two natural operations.

\begin{lemma}
Let $v\colon 2^M\rightarrow\RR_{\geq 0}$ be weakly well-layered and let $f\colon \RR_{\geq 0}\rightarrow\RR_{\geq 0}$ strictly increasing. Then the composition $f\circ v\colon 2^M\rightarrow\RR_{\geq 0}$ is weakly well-layered.
\end{lemma}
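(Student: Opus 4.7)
The plan is to exploit the fact that strictly increasing functions preserve order, so both the greedy choices and the notion of optimum are transported cleanly from $v$ to $f \circ v$. Concretely, I would fix an arbitrary $M' \subseteq M$ and consider any sequence $S_0, S_1, S_2, \dots$ produced by the greedy algorithm applied to $f \circ v$ on $M'$. The goal is to show that for every $i$, $f(v(S_i)) = \max_{S \subseteq M' : |S| = i} f(v(S))$.

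The first step is to observe that the greedy step for $f \circ v$ is identical to a valid greedy step for $v$. Since $f$ is strictly increasing, for any $S_{i-1}$ and any candidates $x, y \in M' \setminus S_{i-1}$ we have
\[
f(v(S_{i-1} \cup \{x\})) \geq f(v(S_{i-1} \cup \{y\})) \iff v(S_{i-1} \cup \{x\}) \geq v(S_{i-1} \cup \{y\}).
\]
Hence $\arg\max_{x \in M' \setminus S_{i-1}} f(v(S_{i-1} \cup \{x\})) = \arg\max_{x \in M' \setminus S_{i-1}} v(S_{i-1} \cup \{x\})$, so the sequence $S_0, S_1, S_2, \dots$ is also a possible output of the greedy algorithm applied to $v$ itself.

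The second step is to apply the assumption that $v$ is \WWL: this directly yields $v(S_i) = \max_{S \subseteq M' : |S| = i} v(S)$ for every $i$. Applying the strictly increasing function $f$ to both sides, and using once more that $f$ preserves maxima of finite sets (i.e.\ $f(\max_j a_j) = \max_j f(a_j)$), we obtain
\[
f(v(S_i)) = f\!\left(\max_{S \subseteq M' : |S| = i} v(S)\right) = \max_{S \subseteq M' : |S| = i} f(v(S)),
\]
which is exactly the \WWL condition for $f \circ v$.

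There is essentially no obstacle here; the only detail to be careful about is that one must work with \emph{every} possible greedy sequence (since ties can be broken arbitrarily in the definition), but this is automatic because the argmax sets coincide exactly. Monotonicity and normalization of $f \circ v$, needed to stay within the definition of valuation function, follow from $f$ being strictly increasing together with $f(0) = 0$; if $f(0) > 0$ one can replace $f$ by $f - f(0)$ without changing the greedy behavior, so this causes no issue for the \WWL property itself.
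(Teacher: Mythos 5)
Your proposal is correct and follows essentially the same argument as the paper: the strict monotonicity of $f$ makes the greedy argmax sets for $f\circ v$ and $v$ coincide, so the greedy sequence for $f\circ v$ is a valid greedy sequence for $v$, and then the \WWL property of $v$ transfers through $f$ since $f$ preserves maxima. The extra remark about normalization is a harmless addition not needed for the \WWL property itself.
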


\begin{proof}
Let $M'\subseteq M$ and assume that $S_0,S_1,S_2,\dots$ are constructed greedily, that is $S_0 = \emptyset$ and $S_i = S_{i-1}\cup \{x_i\}$ where $x_i\in\arg\max_{x\in M' \setminus S_{i-1}}f(v(S_{i-1}\cup\{x\}))$ for $1 \leq i \leq |M'|$. As $f$ is strictly increasing, we see that $x_i\in\arg\max_{x\in M'}f(v(S_{i-1}\cup\{x\}))$ if and only if $x_i\in\arg\max_{x\in M'}v(S_{i-1}\cup\{x\})$. 
Therefore $S_0,S_1,S_2,\dots$ could also arise via the greedy construction based on the valuation $v$. As $v$ is weakly well-layered, this implies that $v(S_i)=\max_{S\subseteq M'\colon |S|=i}v(S)$ for all $i$. As $f$ is increasing, this shows that $f(v(S_i))=\max_{S\subseteq M'\colon |S|=i}f(v(S))$ for all $i$. We conclude that $f\circ v$ is weakly well-layered.
\end{proof}

\begin{lemma}
Let $v\colon 2^M\rightarrow\RR_{\geq 0}$ be weakly well-layered and $B\geq 0$. Then the valuation $u\colon 2^M\rightarrow\RR_{\geq 0}$ given by $u(S)=\min(v(S),B)$ is \WWL.
\end{lemma}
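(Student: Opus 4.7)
The plan is to fix an arbitrary $M'\subseteq M$, consider any greedy sequence $S_0,S_1,S_2,\dots$ produced by running the natural greedy algorithm for $u$ on $M'$, and show that $u(S_i)=\max_{S\subseteq M',\,|S|=i} u(S)$ for every $i$. The natural case split is on whether the cap has been reached: if $u(S_i)=B$, optimality is immediate since $u(T)=\min(v(T),B)\leq B$ for every $T\subseteq M$. So the only nontrivial case is when $u(S_i)<B$, and the idea is to argue that in this case the prefix $S_0,\dots,S_i$ is also a valid greedy sequence for $v$ on $M'$, which lets us invoke the \WWL hypothesis on $v$.

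To carry out that reduction I would use the observation that $u$ and $v$ agree pointwise "below the cap." Concretely, since $S_j\subseteq S_i$ for $j\leq i$, monotonicity of $v$ gives $v(S_j)\leq v(S_i)=u(S_i)<B$, so $u(S_j)=v(S_j)$. Moreover, at each step $j\leq i$ and for every candidate $y\in M'\setminus S_{j-1}$, the greedy rule gives $u(S_{j-1}\cup\{y\})\leq u(S_j)<B$, so $u(S_{j-1}\cup\{y\})=v(S_{j-1}\cup\{y\})$. In particular the $u$-greedy choice $x_j$ also maximizes $v(S_{j-1}\cup\{\cdot\})$ over $M'\setminus S_{j-1}$, certifying that $S_0,\dots,S_i$ is a $v$-greedy sequence. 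Applying the \WWL property of $v$ then yields $v(S_i)=\max_{S\subseteq M',\,|S|=i} v(S)$, and for any competitor $S\subseteq M'$ with $|S|=i$, monotonicity of $t\mapsto \min(t,B)$ gives $u(S)=\min(v(S),B)\leq \min(v(S_i),B)=u(S_i)$.

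The one subtlety I expect to have to guard against is that the $u$-greedy can make choices that are not $v$-greedy once the cap is hit, because many options may appear tied at $u=B$ while their $v$-values differ; so one cannot in general lift a $u$-greedy sequence to a $v$-greedy sequence. The case split above handles exactly this: as long as $u(S_i)<B$ the discrepancy is invisible and the lift works, while in the case $u(S_i)=B$ no such lift is needed because $u(S_i)$ is already the global upper bound on $u$. With that obstacle isolated, the argument is otherwise routine.
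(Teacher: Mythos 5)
Your proof is correct and follows essentially the same route as the paper's: split on whether the cap $B$ has been reached, lift the greedy prefix below the cap to a $v$-greedy sequence (since $u$ and $v$ agree there), and note that bundles attaining value $B$ are trivially optimal. Your explicit justification that every candidate at each sub-cap step has $u$-value, hence $v$-value, below $B$ is a slightly more careful spelling-out of the paper's appeal to $\min(\cdot,B)$ being strictly increasing on $[0,B)$.
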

\begin{proof}
Let $S_0,S_1,S_2,\dots $ be constructed greedily from the valuation $u.$ Suppose that $S_0,S_1,\dots, S_k$ have utility $<B$ and that $S_{k+1},S_{k+2},\dots$ have utility $B$. As $x\mapsto\min(x,B)$ is strictly increasing on $[0,B),$ the sets $S_0,S_1,\dots, S_k$ could have been constructed greedily from $v.$ As $v$ is \WWL, they are therefore optimal of their given size for $v$ and therefore also for $u.$ The sets $S_{k+1},\dots$ all have maximal utility $B$ and are therefore optimal of their given sizes. 
\end{proof}

As a corollary, since additive valuations are \WWL, it follows that the class of budget-additive valuations satisfies the \WWL property.

\begin{corollary}
Any budget-additive valuation is \WWL.
\end{corollary}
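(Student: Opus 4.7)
The plan is to observe that the corollary is an immediate consequence of the preceding closure lemma, provided we first verify the (essentially obvious) fact that additive valuations are \WWL. So my proof would have two short steps.

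First, I would show that every additive valuation $v(S) = \sum_{g\in S} w_g$ is \WWL. Fix $M'\subseteq M$ and run the greedy algorithm, obtaining $S_0,S_1,S_2,\dots$. At step $i$, the marginal $v(S_{i-1}\cup\{x\})-v(S_{i-1}) = w_x$ depends only on $x$, so the greedy choice $x_i$ is simply an element of $M'\setminus S_{i-1}$ with maximum weight $w_{x_i}$. Hence $S_i$ consists of the $i$ elements of $M'$ with the largest weights (breaking ties the same way throughout), and by additivity this is exactly $\arg\max_{S\subseteq M',\,|S|=i} v(S)$. Therefore $v$ is \WWL.

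Second, given a budget-additive valuation $u(S) = \min\{B,\sum_{g\in S} w_g\}$, write $u = \min(v,B)$ where $v(S) = \sum_{g\in S} w_g$ is additive. By the first step, $v$ is \WWL, so the preceding closure lemma applied to $v$ and $B$ immediately yields that $u$ is \WWL.

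I expect no real obstacle here, since both ingredients are already in place: the additive case reduces to the classical greedy-for-knapsack-without-capacity argument, and the closure under $\min(\cdot,B)$ has been proved just above. The only minor point worth stating carefully is the tie-breaking in the additive greedy step, but it does not affect the argument because any tie-breaking rule produces a set of the $i$ largest-weight items in $M'$, which is optimal.
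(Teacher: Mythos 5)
Your proposal is correct and matches the paper's own argument: the paper likewise derives the corollary by applying the preceding closure lemma (truncation by $\min(\cdot,B)$) to an additive valuation, treating the fact that additive valuations are weakly well-layered as immediate (it also follows from the earlier chain additive $\subseteq$ gross substitutes $\subseteq$ well-layered $\subseteq$ weakly well-layered). Your explicit greedy verification for the additive case and the remark on tie-breaking are just a more detailed write-up of the same route.
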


In contrast, it is known that budget-additive valuations are not necessarily gross substitutes, and, as the following example shows, not even well-layered.

\begin{example}\label{ex:baval}
Consider the budget-additive valuation on three goods $a,b,c$ with values $v_a=v_b=2$, $v_c=4$
and a budget of $B=4$. Let $p=(1,1,2)$ be a price vector. Under these prices, the greedy
algorithm would pick good $c$ as its first item. However, $\{a,b\}$ is the unique optimal bundle of size 2, and 
so the greedy algorithm would fail in this case. As a result, the valuation is not well-layered.
\end{example}

\paragraph{\bf Cancelable valuations.}

The class of \WWL valuations also contains the class of cancelable valuations recently defined by \citet{BergerCFF22-almost-full-EFX}, which contains budget-additive, unit-demand, and multiplicative valuations as special cases. A valuation function $v\colon 2^M\rightarrow\RR_{\geq 0}$ is said to be \emph{cancelable} if $v(S \cup \{x\}) > v(T \cup \{x\}) \implies v(S) > v(T)$ for any $S, T \subseteq M$ and $x \in M \setminus (S \cup T)$.

\begin{lemma}
Any cancelable valuation is \WWL.
\end{lemma}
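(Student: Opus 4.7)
The plan is to prove the statement by induction on $i$, using the cancelable property in its contrapositive form together with an exchange argument against the greedy choice.

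First, I would rephrase the cancelable property in a more convenient way. Taking the contrapositive, cancelable is equivalent to the following: for any $S, T \subseteq M$ and any $x \in M \setminus (S \cup T)$,
\[
v(S) \leq v(T) \quad \Longrightarrow \quad v(S \cup \{x\}) \leq v(T \cup \{x\}).
\]
That is, weak inequalities are preserved when the same new element is appended to both sides. This is the only form of the hypothesis I will use.

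Next, fix an arbitrary $M' \subseteq M$ and let $S_0, S_1, S_2, \dots$ be the greedy sequence, so $S_0 = \emptyset$ and $S_i = S_{i-1} \cup \{x_i\}$ with $x_i \in \argmax_{x \in M' \setminus S_{i-1}} v(S_{i-1} \cup \{x\})$. I will show by induction on $i$ that $v(S_i) = \max_{S \subseteq M',\, |S|=i} v(S)$. The base case $i=0$ is trivial. For the inductive step, let $T \subseteq M'$ with $|T| = i$ be arbitrary; the goal is $v(T) \leq v(S_i)$. Since $|T| = i > i-1 = |S_{i-1}|$, there exists some $y \in T \setminus S_{i-1}$; note that $y \in M'$ since $T \subseteq M'$.

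The main (and essentially only) step is the exchange: apply the inductive hypothesis to the size-$(i-1)$ set $T \setminus \{y\} \subseteq M'$ to obtain $v(T \setminus \{y\}) \leq v(S_{i-1})$; then append $y$ to both sides using the contrapositive cancelable property (valid because $y \notin T \setminus \{y\}$ and $y \notin S_{i-1}$) to get $v(T) \leq v(S_{i-1} \cup \{y\})$. Finally, since $y \in M' \setminus S_{i-1}$, the greedy choice of $x_i$ gives $v(S_{i-1} \cup \{y\}) \leq v(S_{i-1} \cup \{x_i\}) = v(S_i)$. Chaining these yields $v(T) \leq v(S_i)$, completing the induction.

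I do not foresee a major obstacle here: the only nontrivial point is ensuring that $y$ satisfies the disjointness conditions needed to invoke the cancelable property, and this is built into the choice $y \in T \setminus S_{i-1}$. The argument uses neither monotonicity nor submodularity; just the single-element cancelable axiom, plus the pigeonhole observation $|T| > |S_{i-1}|$ guaranteeing the existence of $y$.
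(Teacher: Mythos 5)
Your proof is correct and is essentially the paper's argument: the same exchange of an element $y \in T \setminus S_{i-1}$, the same appeal to the greedy choice, and the same use of the cancelable axiom, only phrased directly via its contrapositive rather than as a proof by contradiction. No gaps.
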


\begin{proof}
Let $v$ be cancelable, $M' \subseteq M$, and let $S_0, S_1, S_2, \dots$ be obtained by the greedy algorithm on $v$ and $M'$ (see \cref{def:WWL}). We prove by induction that $v(S_i) = \max_{S \subseteq M': |S| = i} v(S)$ for all $i$. Clearly, this holds for $i=1$.

Now assume that the induction hypothesis holds for some $i \geq 1$ and consider $S_{i+1} = S_i \cup \{x_{i+1}\}$. If there existed $T \subseteq M'$ with $|T| = i+1$ such that $v(T) > v(S_{i+1})$, then, letting $y$ be any element in $T \setminus S_i$, we would obtain
$$v((T \setminus \{y\}) \cup \{y\}) = v(T) > v(S_{i+1}) = v(S_i \cup \{x_{i+1}\}) \geq v(S_i \cup \{y\})$$
where we used the fact that $x_{i+1}$ was added greedily to $S_i$. Since $v$ is cancelable, it follows that $v(T \setminus \{y\}) > v(S_i)$, which contradicts the induction hypothesis for $i$. As a result, the set $S_{i+1}$ must also be optimal.
\end{proof}

The results of this subsection are summarised in \cref{fig:hierarchy}. Note also that the classes of submodular valuations and \WWL valuations are incomparable. For an example of a valuation function that is submodular but not \WWL, see \cref{ex:algo-fail-submodular} in the next section. For the other direction, see the following example of a valuation that is well-layered (and thus \WWL), but not submodular.

\begin{example}\label{ex:WL-not-submo}
Consider the valuation function $v$ on two goods $a,b$ given by $v(\{a,b\})=1$ and $v(\emptyset)=v(\{a\})=v(\{b\})=0$. This valuation function is seen to be well-layered (and thus \WWL), because subsets of equal size have the same valuation. However, it is not submodular, because $v(\{a\}\cup\{b\})-v(\{a\})=1>0=v(\emptyset\cup\{b\})-v(\emptyset)$.
\end{example}

\subsection{The Greedy EFX Algorithm}

We now present a simple algorithm that computes an EFX allocation for many agents that all share the same \WWL valuation function $v$.

\begin{algorithm}[tb]
\caption{Greedy EFX}
\label{alg:algorithm}
\textbf{Input}: $N,M,v$\\
\textbf{Output}: EFX allocation
\begin{algorithmic}
\STATE Let $A_i = \emptyset$ for $i\in N$.
\STATE Let $R = M$.
\WHILE{$R\neq\emptyset$}
\STATE $i = \argmin_{j\in N}v(A_j)$
\STATE $g = \argmax_{x\in R}v(A_i\cup\{x\})$
\STATE $A_i = A_i\cup \{g\}$
\STATE $R = R\setminus\{g\}$
\ENDWHILE
\STATE \textbf{return} $(A_1,\dots, A_n)$
\end{algorithmic}
\end{algorithm}

\begin{theorem}
If the valuation function $v$ is \WWL, then the output of Algorithm 1 is EFX. In particular, by using the cut-and-choose protocol one may compute an EFX allocation for two agents with different valuations as long as one of these valuations is \WWL.
\end{theorem}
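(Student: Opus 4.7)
The plan is to prove the EFX property directly for the output of Algorithm 1 when all agents share the \WWL valuation $v$, and then deduce the cut-and-choose claim as a standard consequence. Fix agents $i, j \in N$ and a good $g \in A_j$; I want to establish $v(A_i) \geq v(A_j \setminus \{g\})$ (the case $A_j = \emptyset$ being vacuous).

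First I would label agent $j$'s picks in chronological order as $g^j_1, g^j_2, \ldots, g^j_{k_j}$, and write $A_j^{(t)} := \{g^j_1, \ldots, g^j_t\}$. The key structural observation is that the sequence $g^j_1, \ldots, g^j_{k_j}$ is a valid greedy sequence on $M' := A_j$ in the sense of \cref{def:WWL}. Indeed, when agent $j$ made their $t$-th pick, the algorithm chose $g^j_t \in \argmax_{x \in R} v(A_j^{(t-1)} \cup \{x\})$, where $R$ was the set of as-yet-unallocated goods. Since $j$'s later picks $g^j_{t+1}, \ldots, g^j_{k_j}$ were still in $R$ at that moment, $A_j \setminus A_j^{(t-1)} \subseteq R$, and therefore $g^j_t \in \argmax_{x \in A_j \setminus A_j^{(t-1)}} v(A_j^{(t-1)} \cup \{x\})$, matching the greedy update rule applied to $M' = A_j$. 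The \WWL hypothesis then yields
\[ v(A_j \setminus \{g^j_{k_j}\}) = v(A_j^{(k_j-1)}) = \max_{T \subseteq A_j,\, |T| = k_j - 1} v(T) \geq v(A_j \setminus \{g\}), \]
which reduces the task to proving the EFX inequality for the specific choice $g = g^j_{k_j}$, i.e., the last good agent $j$ received.

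Next I would invoke the selection rule of Algorithm 1. Let $\widetilde{A}_i$ denote $A_i$ at the moment just before agent $j$ received $g^j_{k_j}$. Because $j$ was selected at that step as an agent of minimum value, $v(A_j^{(k_j-1)}) \leq v(\widetilde{A}_i)$, and by monotonicity $v(\widetilde{A}_i) \leq v(A_i)$. Chaining with the previous inequality gives $v(A_j \setminus \{g\}) \leq v(A_i)$, proving EFX for the identical-agent output.

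The cut-and-choose consequence is then routine: running Algorithm 1 on two identical copies of the \WWL agent produces a bipartition $(A_1, A_2)$ that is EFX with respect to that agent's valuation, so letting the other agent pick their preferred bundle makes the latter envy-free (hence EFX), while the \WWL agent is EFX by construction of the partition. The main obstacle is the first step above: correctly recasting agent $j$'s actions as a greedy run on $M' = A_j$ so that \cref{def:WWL} can be applied. The subtlety is that agent $j$ actually maximizes over the larger set $R$, but since $R$ contains all of $j$'s future picks, any $\argmax$ over $R$ whose optimizer happens to lie in $A_j$ is automatically an $\argmax$ over $A_j \setminus A_j^{(t-1)}$; this is what legitimizes the use of \WWL and ultimately the ``remove the last-picked good'' reduction.
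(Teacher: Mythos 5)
Your proposal is correct and rests on exactly the same two ingredients as the paper's proof: each agent's picks form a valid greedy run on their own final bundle (so \cref{def:WWL} makes the set of all-but-the-last pick optimal among subsets of that size), combined with the minimum-value selection rule at the moment of the last pick. The only difference is organizational — the paper packages the argument as an inductive invariant that every partial allocation is EFX, whereas you certify the final allocation directly, paying for this with one extra (harmless) appeal to monotonicity to pass from $\widetilde{A}_i$ to $A_i$.
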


\begin{proof}
We show that the algorithm maintains a partial EFX allocation throughout. Initially the partial allocation is empty and so clearly EFX. Suppose that at the beginning of some round the current partial allocation $(X_1,\dots, X_n)$ is EFX and that some agent $i\in N$ receives a good $g$ in this round. We have to show that the new (partial) allocation $(X_1',\dots, X_n')$ is EFX, where $X_i ' = X_i\cup\{g\}$ and $X_j' = X_j$ for $j\neq i$. Clearly, the only thing we have to argue is that $v(X_i'\setminus\{g'\})\leq v(X_j')$ for all $j\in N$ and all $g'\in X_i'$. As $i$ received a good in the current round we have that $v(X_i)\leq v(X_j)=v(X_j')$. Therefore, it suffices to argue that $v(X_i'\setminus\{g'\})\leq v(X_i)$ for all $g'\in X_i '$. This last inequality follows from $v$ being \WWL by taking $M' = X_i'$. 
With this $M'$, the set $X_i$ could namely be produced by running the greedy algorithm. Therefore, $X_i$ is an optimal subset of $M'=X_i'$ of size $|X_i|=|X_i'|-1$, meaning that $v(X_i'\setminus\{g'\})\leq v(X_i)$ for all $g\in X_i'$.
\end{proof}

The algorithm can fail to provide an EFX allocation for submodular valuations that are not \WWL, as the following example shows.

\begin{example}\label{ex:algo-fail-submodular}
Consider an instance with two agents and four goods denoted $a,b,c,d$, where the valuation function $v$ is given by: $v(\{a\})=11, v(\{b\})=v(\{c\})=10, v(\{d\})=16, v(\{a,b\})=15, v(\{a,c\})=15, v(\{b,c\})=17, v(\{a,b,c\})=18$, and $v(S)=18$ for all sets $S$ that satisfy $d \in S$ and $|S| \geq 2$. It can be checked by direct computation that $v$ is indeed submodular. The greedy EFX algorithm yields: agent 1 gets good $d$, and then agent 2 gets goods $a,b,c$. This allocation is not EFX, because $v(\{d\}) < v(\{b,c\})$.
\end{example}

\section{\PLS-completeness for Submodular Valuations}\label{sec:pls}

\paragraph{\bf Total \NP search problems (\TFNP).} A total search problem is given by a relation $R \subseteq \{0,1\}^* \times \{0,1\}^*$ that satisfies: for all $x \in \{0,1\}^*$, there exists $y \in \{0,1\}^*$ such that $(x,y) \in R$. The relation $R$ is interpreted as the following computational problem: given $x \in \{0,1\}^*$, find some $y \in \{0,1\}^*$ such that $(x,y) \in R$. The class \TFNP \citep{MegiddoP91-TFNP} is defined as the set of all total search problems $R$ such that the relation $R$ is polynomial-time decidable (i.e., given some $x,y$ we can check in polynomial time whether $(x,y) \in R$) and polynomially balanced (i.e., there exists some polynomial $p$ such that $|y| \leq p(|x|)$ whenever $(x,y) \in R$).

Let $R$ and $S$ be two problems in \TFNP. We say that $R$ reduces to $S$ if there exist polynomial-time functions $f: \{0,1\}^* \to \{0,1\}^*$ and $g: \{0,1\}^* \times \{0,1\}^* \to \{0,1\}^*$ such that for all $x,y \in \{0,1\}^*$: if $(f(x),y) \in S$, then $(x,g(y,x)) \in R$. In other words, $f$ maps an instance of $R$ to an instance of $S$, and $g$ maps back any solution of the $S$-instance to a solution of the $R$-instance.

\paragraph{\bf Polynomial Local Search (\PLS).}

\citet{JohnsonPY88-PLS} introduced the class \PLS, a subclass of \TFNP, to capture the complexity of computing locally optimal solutions in settings where local improvements can be computed in polynomial time. In order to define the class \PLS, we proceed as follows: first, we define a set of basic \PLS problems, and then define the class \PLS as the set of all \TFNP problems that reduce to a basic \PLS problem.

A \emph{local search problem} $\Pi$ is defined as follows. For every instance\footnote{A more general definition would also include a polynomial-time recognizable set $D_{\Pi} \subseteq \{0,1\}^*$ of valid instances. The assumption that $D_{\Pi} = \{0,1\}^{*}$ is essentially without loss of generality. Indeed, for $I \notin D_{\Pi}$ we can define $F_I = \{0\}$, $c_I(0)=1$ and $N_I(0) = \{0\}$. Note that this does not change the complexity of the problem.} $I \in \{0,1\}^*$, there is a finite set $F_I \subseteq\{0,1\}^{*}$ of \emph{feasible solutions}, an objective function $c_I\colon F_I\rightarrow\NN$, and for every feasible solution $s \in F_I$ there is a neighborhood $N_I(s)\subseteq F_I.$ Given an instance $I$, one seeks a \emph{local optimum} $s^{*}\in F_I$ with respect to $c_I$ and $N_I$, meaning, in case of a maximization problem, that $c_I(s^{*})\geq c_I(s)$ for all neighbors $s\in N_I(s^{*})$.

\begin{definition}
A local search problem $\Pi$ is a basic \PLS problem if there exists some polynomial $p$ such that $F_I \subseteq \{0,1\}^{p(|I|)}$ for all instances $I$, and if there exist polynomial-time algorithms $A,B$ and $C$ such that:
\begin{enumerate}
\item Given an instance $I$, algorithm $A$ produces an initial feasible solution $s_0\in F_I$. 
\item Given an instance $I$ and a string $s \in \{0,1\}^{p(|I|)}$, algorithm $B$ determines whether $s$ is a feasible solution and, if so, computes the objective value $c_I(s)$. 
\item Given an instance $I$ and any feasible solution $s\in F_I$, the algorithm $C$ checks if $s$ is locally optimal and, if not, produces a feasible solution $s'\in N_I(s)$ that improves the objective value. 
\end{enumerate}
\end{definition}
Note that any basic \PLS problem lies in \TFNP.

\begin{definition}
The class \PLS is defined as the set of all \TFNP problems that reduce to a basic \PLS problem.
\end{definition}

A problem is \PLS-complete if it lies in \PLS and if every problem in \PLS reduces to it. \citet{JohnsonPY88-PLS} showed that the so-called \FLIP problem is \PLS-complete. We will define this problem later when we make use of it to prove our \PLS-hardness result.

\subsection{\PLS-membership}

\citet{PlautR20-EFX} prove the existence of an EFX allocation when all agents share the same monotone valuation, by introducing the leximin++ solution. In this section, we show how their existence proof can be translated into a proof of \PLS-membership for the following problem.

\begin{definition}[\EFX]
An instance $I=(N,M,C)$ of the \EFX search problem consists of a set of agents $N=[n]$, a set of goods $M=[m]$, and a boolean circuit $C$ with $m$ input gates. The circuit $C$ defines a valuation function $v\colon 2^M\rightarrow\NN$ which is the common valuation of all the agents. A solution is one of the following: 
\begin{enumerate}
    \item An allocation $(X_1,\dots, X_n)$ that is EFX.
    \item A pair of bundles $S\subseteq T$ that violate monotonicity, that is, $v(S)>v(T)$.
\end{enumerate}
\end{definition} 

The reason for allowing the violation-of-monotonicity solutions is that the circuit $C$ is not guaranteed to define a monotone valuation, and in this case an EFX allocation is not guaranteed to exist. Importantly, we note that our \PLS-hardness result (presented in the next section) does not rely on violation solutions. In other words, even the version of the problem where we are promised that the valuation function is monotone remains \PLS-hard.

\begin{theorem}
The \EFX problem lies in \PLS.
\end{theorem}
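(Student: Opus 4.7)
The plan is to translate the leximin++ existence proof of \citet{PlautR20-EFX} into a basic \PLS problem whose local optima are either EFX allocations or yield a monotonicity violation. I would define the basic \PLS problem on instances $(N, M, C)$ as follows. Feasible solutions are all allocations $A = (X_1,\dots, X_n)$. The objective $c(A)$ is the leximin++ value of $A$, encoded as a polynomial-length integer by forming the bundle scores $L(X_i) := v(X_i)\cdot(m+1) + |X_i|$ (so that $L$ matches the leximin++ bundle order: value dominant, with cardinality as tie-breaker), sorting them in increasing order, and concatenating them into a single integer with fixed $O(|C|+\log m)$-bit blocks. The neighborhood $N(A)$ contains, for each triple $(i, j, g)$ with $g\in X_j$ and $v(X_j\setminus\{g\}) > v(X_i)$, the ``swap-fix'' allocation defined by $X_i' := X_j\setminus\{g\}$, $X_j' := X_i\cup\{g\}$, and $X_k' := X_k$ for $k\neq i,j$. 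An initial feasible solution is the allocation giving all goods to agent $1$. Since there are only $O(n^2m)$ neighbors and each can be constructed and scored in polynomial time using $C$, this is a basic \PLS problem.

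The technical heart of the argument is the following lemma, which I would prove by a short case analysis: if $A$ is a local optimum and $v$ is monotone, then $A$ is EFX. Suppose not, and pick any EFX violation $(i,j,g)$. Monotonicity gives $v(X_j)\geq v(X_j\setminus\{g\}) > v(X_i)$, so in the leximin++ bundle order the smaller of the old pair $\{p_i, p_j\}$ (with $p_k := (v(X_k), |X_k|)$) is $p_i$. Using $v(X_i\cup\{g\})\geq v(X_i)$, the leximin++-smaller of the new pair $\{q_i, q_j\}$, where $q_i := (v(X_j\setminus\{g\}), |X_j|-1)$ and $q_j := (v(X_i\cup\{g\}), |X_i|+1)$, strictly dominates $p_i$: either strictly in value, or, when $v(X_i\cup\{g\}) = v(X_i)$, with the same value but strictly larger cardinality (via $|X_i|+1 > |X_i|$). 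A standard multiset-sorting argument then lifts this pointwise improvement to $c(A')>c(A)$, contradicting local optimality.

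Finally, I reduce \EFX to this basic \PLS problem via the identity map on instances. Given a local optimum $A$, I first check in polynomial time whether $A$ is EFX; if so, output $A$. Otherwise, I pick any EFX violation $(i,j,g)$: the lemma forces monotonicity to fail on one of the two pairs used in its proof, so evaluating $C$ on the four sets $X_i, X_i\cup\{g\}, X_j\setminus\{g\}, X_j$ identifies a solution of the second kind, namely $S\subseteq T$ with $v(S)>v(T)$. The main obstacle is the leximin++ improvement argument itself: the subcase $v(X_i\cup\{g\}) = v(X_i)$ is delicate because the smallest bundle value does not rise, and the strict lex improvement has to be extracted entirely through the cardinality tie-breaker of the leximin++ ordering acting on the sorted multiset of bundle scores.
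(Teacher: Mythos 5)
Your proposal is correct and follows the same overall strategy as the paper: encode the leximin++ ordering of \citet{PlautR20-EFX} as a polynomial-length integer objective over allocations, use single-good transfers as the neighborhood, show that any EFX violation at a point where monotonicity holds locally yields an improving neighbor, and fall back to outputting a violating pair $S\subseteq T$ otherwise. Where you genuinely diverge is in the improvement argument, which is the delicate part. The paper first re-chooses the envious agent $i$ to be one of globally minimum utility (with a specific tie-breaking rule among tied agents) and then tracks how the sorted agent ordering changes position by position; you instead handle an \emph{arbitrary} violation $(i,j,g)$ by observing that the pair of bundle scores $\{L(X_i),L(X_j)\}$ is replaced by a pair whose minimum strictly exceeds $L(X_i)=\min\{L(X_i),L(X_j)\}$, and invoking a multiset fact to conclude that the sorted score vector increases lexicographically. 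That fact is the only step you leave unproved, but it is true and short: the smallest score value whose multiplicity changes is $L(X_i)$, and its multiplicity strictly decreases, which is exactly the condition for the sorted-increasing sequence to grow in the (smallest-entry-most-significant) lexicographic order. Your symmetric encoding $L(X_i)=v(X_i)(m+1)+|X_i|$ also absorbs the troublesome subcase $v(X_i\cup\{g\})=v(X_i)$ automatically, since then $L(X_i\cup\{g\})=L(X_i)+1$, and it removes the need for the paper's agent-index tie-breaking. Net effect: same route, but your version of the key step is cleaner and slightly more general (no need to privilege the minimum-utility agent); just write out the one-line multiset lemma, and note that a violation with $i=j$ is itself directly a monotonicity violation.
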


The problem of computing an EFX allocation for two non-identical agents with valuations $v_1$ and $v_2$ is reducible to the problem of computing an EFX allocation for two identical agents via the cut-and-choose protocol. As a result, we immediately also obtain the following:

\begin{corollary}
Computing an EFX allocation for two not necessarily identical agents is in \PLS. 
\end{corollary}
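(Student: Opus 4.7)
The plan is to implement cut-and-choose as a \TFNP reduction from the non-identical two-agent EFX problem to the identical \EFX problem, whose \PLS-membership was just established. First I would define the two components of the reduction. The instance map $f$ takes an instance $(M, C_1, C_2)$ of the non-identical problem and outputs the identical \EFX instance $(\{1,2\}, M, C_1)$, i.e.\ two copies of agent~$1$ sharing the valuation $v_1$ encoded by $C_1$. The recovery map $g$ then converts any solution of this identical instance back into a solution of the original non-identical instance.

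There are two cases for a solution $y$ of the identical instance. If $y$ is a pair $(S,T)$ with $S \subseteq T$ and $v_1(S) > v_1(T)$, then $g$ simply outputs $(S,T)$, which is a valid monotonicity-violation solution of the non-identical problem with respect to $v_1$. Otherwise $y$ is an EFX allocation $(A,B)$ for two identical agents under $v_1$, so both $v_1(A) \geq v_1(B \setminus \{g\})$ and $v_1(B) \geq v_1(A \setminus \{g\})$ hold for every $g$ in the appropriate bundle. The map $g$ then evaluates $v_2(A)$ and $v_2(B)$ via the circuit $C_2$, assigns agent~$2$ the bundle of larger $v_2$-value (breaking ties arbitrarily), and hands the remaining bundle to agent~$1$, producing an allocation $(X_1,X_2)$.

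Agent~$1$'s EFX condition $v_1(X_1) \geq v_1(X_2 \setminus \{g\})$ is immediate from the identical EFX property of $(A,B)$, since that property holds symmetrically in both bundles. For agent~$2$, the construction gives $v_2(X_2) \geq v_2(X_1)$. If $v_2$ is monotone then $v_2(X_2) \geq v_2(X_1) \geq v_2(X_1 \setminus \{g\})$ for every $g \in X_1$ and $(X_1,X_2)$ is EFX. If instead some $g^\star \in X_1$ satisfies $v_2(X_1 \setminus \{g^\star\}) > v_2(X_2) \geq v_2(X_1)$, then the pair $(X_1 \setminus \{g^\star\}, X_1)$ is an explicit monotonicity violation of $v_2$. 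The recovery $g$ therefore performs a single linear sweep over $g \in X_1$, either certifying EFX or exhibiting such a $g^\star$, and runs in polynomial time.

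There is no real obstacle here: the argument is the complexity-theoretic formalisation of the remark preceding the corollary that cut-and-choose carries EFX from the identical to the non-identical case. The only wrinkle that needs any care is that neither $C_1$ nor $C_2$ is guaranteed to encode a monotone valuation. Non-monotonicity of $v_1$ is absorbed by the identical \EFX oracle (which may legitimately return a violation of $v_1$), and the sweep above shows that the only remaining failure mode of cut-and-choose — agent~$2$ envying some $X_1 \setminus \{g^\star\}$ — is itself witnessed by a monotonicity violation of $v_2$. Hence every output of $g$ is a valid solution of the non-identical problem, the reduction is a proper \TFNP reduction, and the non-identical problem inherits \PLS-membership from \EFX.
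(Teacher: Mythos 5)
Your proposal is correct and is exactly the paper's argument: the paper derives this corollary in one line by observing that cut-and-choose reduces the non-identical two-agent problem to the identical \EFX problem, and your write-up is simply a careful implementation of that reduction (including the sensible handling of non-monotone circuits via violation solutions, which the paper leaves implicit).
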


\begin{proof}
To show that the \EFX problem is in \PLS, we reduce it to a basic \PLS problem. An instance of this basic \PLS problem is just an instance of the \EFX problem, i.e, a tuple $I = (N,M,C)$. The set of feasible solutions $F_I$ is the set of all possible allocations of the goods in $M$ to the agents in $N$. As an initial feasible solution, we simply take the allocation where one agent receives all goods. It remains to specify the objective function $c_I$ and the neighborhood structure $N_I$, and then to argue that a local optimum corresponds to an EFX allocation.

\citet[Section~4]{PlautR20-EFX} introduce the \emph{leximin++} ordering on the set of allocations, and show that the maximum element with respect to that ordering must be an EFX allocation. In fact, a closer inspection of their proof reveals that even a \emph{local} maximum with respect to the leximin++ ordering must be an EFX allocation. As a result, we construct an objective function that implements the leximin++ ordering and then use the same arguments as \citet[Theorem~4.2]{PlautR20-EFX}.

For an allocation $(X_1,\dots, X_n)$, we let $O^X = (i_1,\dots, i_n)$ be an ordering of the agents according to increasing values of $v(X_i)$ (if multiple agents have bundles of equal utility, we break ties by ordering tied agents in terms of their agent number, i.e., if agents $i$ and $j$ are tied, and $i < j$, then agent $i$ will appear before agent $j$ in the ordering). The objective value is then defined as
\begin{align*}
c_I(X) =\, & |X_{i_n}|+v(X_{i_n})K\\
 +&|X_{i_{n-1}}|K^2+v(X_{i_{n-1}})K^3\\
 + & \dots \\
 + &|X_{i_{1}}|K^{2n-2}+v(X_{i_1})K^{2n-1}
\end{align*}
where $K$ is an upper bound on the size or utility of any bundle. We claim that if an allocation $X$ is not EFX, then one may construct an allocation $X'$ from $X$ by moving a single good from one bundle to another such that the objective strictly increases, $c_I(X')>c_I(X)$. Therefore, we will consider local maximization of this objective and we define the neighborhood of $X$ to be $N_I(X) = \{X'\in F_I\colon \exists i,j\in N, \exists g\in X_j\mbox{ s.t. }X'_i = X_i\cup\{g\},\; X'_j = X_j\setminus\{g\},\; X'_k=X_k\mbox{ for }k\neq i,j \}$. We note that the cardinality of $N_I(X)$ is polynomial in $n$ and $m$, so the algorithm for finding an improving neighbor if one exists may simply compute the objective value for every allocation in the neighborhood. Thus, this local maximization problem is indeed a basic \PLS problem.

Finally, we have to show that any local maximum $X\in F_I$ yields a solution to the \EFX problem, i.e., $X$ is an EFX allocation or $X$ yields a violation of monotonicity. We say that an allocation $X$ yields a violation of monotonicity, if there exist $i\in N$ and $g\in M$ such that $v(X_i\setminus \{g\})>v(X_i)$ or $v(X_i\cup \{g\})<v(X_i)$. We note that if $X$ yields a violation of monotonicity, then the violation can be found in polynomial time.

Consider an allocation $X\in F_I$ that is not EFX and that does not yield a violation of monotonicity. We will show that $X$ cannot be a local maximum, which then implies the desired statement by contrapositive.
Since $X$ is not EFX, we may find $i,j\in N$ and $g\in X_j$ such that $v(X_i)<v(X_j\setminus\{g\})$. Without loss of generality, we may assume that $i =\arg\min_{k\in N}v(X_k)$, and if more than one agent attains this minimum, then we take the $i$ that appears last among those tied agents in $O^X$ according to the tie-breaking. Now define an allocation $X'$ by
\begin{align*}
& X'_i = X_i\cup\{g\}\\
& X'_j = X_j\setminus\{g\}\\
& X'_k=X_k\mbox{ for }k\neq i,j
\end{align*}
and note that $X'\in N_I(X)$. We claim that $c_I(X')>c_I(X)$, meaning that $X$ is not a local maximum.

In order to prove this, we first show that the orderings $O^X$ and $O^{X'}$ agree in their first $\ell$ positions, where $\ell \in \{0,1, \dots, n-1\}$ is the index such that $O^X_{\ell +1} = i$, i.e., agent $i$ appears in position $\ell+1$ in $O^X$. Let $S$ denote the set of agents that appear in $O^X$ before agent $i$, i.e., the first $\ell$ agents appearing in $O^X$. Note that $S$ consists of all the agents that have utility $v(X_i)$ in allocation $X$, excluding $i$. First, observe that $j\notin S$, because $v(X_j)\geq v(X_j\setminus\{g\})>v(X_i)$ as $X$ does not yield a violation of monotonicity. Therefore, we find that the bundles of the agents in $S$ are not changed from allocation $X$ to $X'$, and, in particular, these agents still have utility $v(X_i)$ in allocation $X'$. Furthermore, in allocation $X'$, all other agents (except possibly $i$) have strictly larger utility than $S$-agents, namely $v(X'_j)=v(X_j\setminus\{g\})>v(X_i)$, and $v(X'_k)=v(X_k)>v(X_i)$ for $k\notin S\cup \{i,j\}$. Finally, $v(X'_i)=v(X_i\cup\{g\})\geq v(X_i)$ as $X$ does not yield a violation of monotonicity, and thus, in allocation $X'$, agent $i$ is either also tied with the agents in $S$, or it has strictly larger utility. In any case, by the tie-breaking, the first $\ell$ positions of $O^X$ and $O^{X'}$ are the same.

We now argue that $c_I(X')>c_I(X)$. Since $O^X$ and $O^{X'}$ agree in their first $\ell$ positions, and the bundles of those first $\ell$ agents have not changed, the $2\ell$ highest-order terms in $c_I(X)$ and $c_I(X')$ have identical coefficients. By definition, $O^{X}_{\ell+1}=i$. If $O^{X'}_{\ell+1}=i$, then we have that $c_I(X')>c_I(X)$, because $v(X'_i)=v(X_i\cup\{g\})\geq v(X_i)$ and $|X'_i|=|X_i\cup\{g\}|>|X_i|$, meaning that the coefficient in front of $K^{2n-(2\ell+1)}$ is at least as large in $c_I(X')$ as in $c_I(X)$ and the coefficient in front of $K^{2n-(2\ell+2)}$ is strictly larger.
If $O^{X'}_{\ell+1}=k\neq i$, then we have that $c_I(X')>c_I(X)$, because $v(X'_k)>v(X_i)$, implying that the coefficient in front of  $K^{2n-(2\ell+1)}$ is strictly larger in $c_I(X')$ than in $c_I(X)$. We conclude that $X$ is not a local maximum. 
Therefore, by contraposition, a local maximum is an EFX allocation or it yields a violation of monotonicity.
\end{proof}

\subsection{\PLS-hardness}

In this section we prove the following theorem.

\begin{theorem}\label{theorem:2agentsPLShard}
The problem of computing an EFX allocation for two identical agents with a submodular valuation function is \PLS-hard.
\end{theorem}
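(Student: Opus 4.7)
The plan is to prove the theorem via a two-step chain of reductions, going from the canonical \PLS-complete problem \FLIP through a natural local-search problem on Kneser graphs to two-agent submodular \EFX. Formally, I introduce the \KNESER problem: given a Boolean circuit that evaluates an objective $f$ on the $k$-subsets of $[n]$, find a set $S$ of size $k$ with $f(S)\geq f(T)$ for every $T$ of size $k$ disjoint from $S$. The first link \KNESER $\to$ \EFX is essentially the computational analogue of the first step in the query lower bound of \citet{PlautR20-EFX}; the new and technically demanding step is to establish \PLS-hardness of \KNESER by reducing \FLIP to it.

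For the \KNESER $\to$ \EFX reduction I would take an instance $f\colon\binom{[n]}{k}\to\NN$ with $n$ sufficiently larger than $2k$, and build a submodular valuation $v$ on a set of goods of the form $M=[n]\cup F$, where $F$ is a set of ``filler'' goods tailored so that in any EFX allocation $(X_1,X_2)$ with identical agents the intersections $X_1\cap[n]$ and $X_2\cap[n]$ are forced to be disjoint $k$-subsets of $[n]$. The $f$-dependent contribution to $v$ only activates when exactly a $k$-subset of $[n]$ is present, while a concave-in-cardinality core dominates marginal behaviour and enforces monotonicity and submodularity. Under this setup, after cancellation of $f$-independent terms, the EFX inequalities reduce precisely to $f(X_1\cap[n])\geq f(T)$ for every $k$-set $T$ disjoint from $X_1\cap[n]$, so the Kneser solution $S=X_1\cap[n]$ can be read off in polynomial time. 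The main routine work in this step is verifying submodularity and engineering the filler so that only balanced allocations survive EFX.

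The heart of the argument, and the main obstacle, is the \FLIP $\to$ \KNESER reduction. The difficulty is that a Kneser neighbour of $S$ must be \emph{entirely disjoint} from $S$, whereas \FLIP improvements flip a single bit; hence a Kneser move cannot realise a bit-flip directly, and the objective $f$ must be crafted so that local improvement in the Kneser landscape nevertheless simulates bit-flip improvement in the \FLIP instance. My plan is to use a two-track encoding of a \FLIP input $x\in\{0,1\}^{N}$ as a $k$-subset $\mathrm{enc}(x)\subseteq[n]$, in which each coordinate of $x$ selects one of two disjoint large blocks (a ``true'' and a ``false'' block), with $k=N\cdot B$ for a sufficiently large block size $B$. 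The objective $f$ is then defined in tiers: (i) a dominant potential that strictly decreases along any Kneser move toward a valid encoding, ruling out invalid local optima; (ii) a middle-order term evaluating the \FLIP objective on valid encodings; and (iii) a low-order tie-breaker that funnels improving moves out of a valid encoding through a canonical sequence of intermediate configurations corresponding to flipping a single coordinate. The crux is to design the tiered $f$ so that every locally-improving Kneser path from an invalid $S$ terminates in a valid encoding, and that from a non-locally-optimal $\mathrm{enc}(x)$ any improving neighbour lets us recover, in polynomial time, a bit-flip improvement for the underlying \FLIP instance.

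Composing the two reductions yields a polynomial-time \TFNP-reduction from \FLIP to two-agent submodular \EFX, which together with the membership result of the previous subsection establishes \PLS-completeness. The bulk of the technical effort lies in the \FLIP-to-\KNESER step: making the coarse disjointness-based neighbourhood of the Kneser graph simulate the fine single-bit-flip neighbourhood of the hypercube by means of a carefully layered objective and a redundant block encoding.
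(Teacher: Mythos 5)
Your overall architecture is exactly the paper's: reduce \FLIP to a local-search problem on a Kneser graph, then reduce that to \EFX for two identical submodular agents via (a computational analogue of) the Plaut--Roughgarden construction. But the two places where you deviate from the obvious instantiation are, respectively, a complication that breaks an exact correspondence and a gap where the actual construction is missing. On \KNESER $\to$ \EFX: the paper works with the \emph{odd} Kneser graph $K(2k+1,k)$ and exactly $2k+1$ goods, with no filler. This is not just for convenience. With $2k+1$ goods any EFX allocation must split as $(k,k+1)$, and the Kneser neighbours of the $k$-bundle $X_1$ are then \emph{precisely} the sets $X_2\setminus\{g\}$, so the EFX inequalities coincide exactly with local optimality in $K(2k+1,k)$. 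In your version, $X_1\cap[n]$ and $X_2\cap[n]$ partition $[n]$, so they cannot both be $k$-subsets when $n>2k$; and, more importantly, EFX only compares $X_1$ against sets of the form $X_2\setminus\{g\}$, so it cannot certify $f(X_1\cap[n])\geq f(T)$ for \emph{every} $k$-set $T$ disjoint from $X_1\cap[n]$, which is what your formulation of \KNESER demands. You must either shrink the Kneser neighbourhood you target or set $n=2k+1$, at which point the fillers and the concave core are unnecessary: the valuation $v(X)=2|X|$ for $|X|<k$, $v(X)=2k-2^{-C(X)}$ for $|X|=k$, and $v(X)=2k$ for $|X|>k$ already works, and its submodularity is a two-line marginal-value check.

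The more serious issue is that the \FLIP $\to$ \KNESER step, which is where the theorem's technical content lives, is described only as a list of desiderata for a ``tiered'' objective rather than as an objective. Your two-track block encoding is the right instinct --- the paper's encoding is exactly the block-size-one version, mapping a \FLIP input $u\in\{0,1\}^p$ to the vertex $u\overline{u}0\in\{0,1\}^{2p+1}$, with one extra coordinate acting as a phase bit --- but the proof consists of exhibiting the cost function and doing the case analysis. Concretely, the paper assigns cost $2\cdot C_F(u)$ to vertices $u\overline{u}0$, cost $2\min(C_F(\overline{u}),C_F(v))+1$ to ``transition'' vertices $uv1$ with $\Delta(\overline{u},v)=1$, and $M+\Delta(\overline{u},v)$ to everything else, and then verifies that no vertex of the third type is a local minimum (this needs a small pigeonhole argument to find a coordinate that can be dropped so that a disjoint improving neighbour exists), that no transition vertex is a local minimum, and that a locally minimal $u\overline{u}0$ forces $u$ to be a \FLIP local minimum. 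Until you specify your tiers at that level of detail and carry out those three verifications --- in particular showing that every invalid configuration admits an improving neighbour that is \emph{entirely disjoint} from it --- what you have is a plan for the reduction, not a proof of it.
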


The reduction consists of two steps. First, following \citet{PlautR20-EFX}, we reduce the problem of local optimization on an odd Kneser graph to the problem of computing an EFX allocation for two agents sharing the same submodular valuation function. Then, in the second step, which is also our main technical contribution, we show that the \PLS-complete problem \FLIP reduces to local optimization on an odd Kneser graph.

\subsubsection{\texorpdfstring{$\KNESER \boldsymbol{\leq} \EFX$}{\KNESER < \EFX}}

For $k \in \NN$, the odd Kneser graph $K(2k+1,k)$ is defined as follows: the vertex set consists of all subsets of $[2k+1]$ of size $k$, and there is an edge between two vertices if the corresponding sets are disjoint. We identify the vertex set of $K(2k+1,k)$ with the set $\{x\in \{0,1\}^{2k+1}\colon ||x||_{1}=k\}$, where $||x||_{1}=\sum_{i=1}^{2k+1}x_i$ denotes the 1-norm. Note that there is an edge between $x$ and $x'$ if and only if $\langle x,x'\rangle =0$, where $\langle \cdot, \cdot \rangle$ denotes the inner product.

\begin{definition}[\KNESER]
The \KNESER problem of local optimization on an odd Kneser graph is defined as the following basic \PLS problem.
An instance of the \KNESER problem consists of a boolean circuit $C$ with $2k+1$ input nodes for some $k\in\NN$.
The set of feasible solutions is $F_C=\{x\in \{0,1\}^{2k+1}\colon ||x||_{1}=k\}$, and the neighborhood of some $x\in F_C$ is given by $N_C(x)=\{x'\in F_C \colon \langle x,x'\rangle = 0\}.$ The goal is to find a solution that is a local maximum with respect to the objective function $C(x) = \sum_{i=0}^{m-1}y_i\cdot 2^i$, where $y_0,\dots, y_{m-1}$ denote the output nodes of the circuit $C$. 
\end{definition}

\begin{lemma}
\KNESER reduces to \EFX with two identical submodular agents.
\end{lemma}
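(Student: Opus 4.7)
The plan, following the strategy of \citet{PlautR20-EFX}, is to encode a \KNESER instance $C$ on $2k+1$ bits into a submodular valuation $v\colon 2^{[2k+1]}\to \NN$ such that any EFX partition of $M=[2k+1]$ into two bundles picks out a local maximum of $C$ on $K(2k+1,k)$. Guided by the intuition that EFX for two identical agents on an odd number $2k+1$ of goods should force nearly balanced bundles, I take $v$ of the form
\[
v(S) \;=\; \alpha(|S|) \;+\; B(S),
\]
where $B(S)=C(\chi_S)$ if $|S|=k$ and $B(S)=0$ otherwise, and where $\alpha\colon\{0,1,\dots,2k+1\}\to\NN$ is a strongly concave, strongly increasing integer sequence. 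Concretely, setting $B_{\max}=2^m$ as an upper bound on $C$, $\delta = 2B_{\max}+1$, and $\alpha(i) = Ki - \delta\binom{i}{2}$ with $K = 2k\delta + B_{\max}+1$, every first difference of $\alpha$ lies in $[B_{\max}+1,\, K]$ and every second difference of $\alpha$ is exactly $\delta$. The boolean circuit for $v$ is polynomial-size in the circuit for $C$ and is built in polynomial time, yielding the instance-mapping $f$.

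The next step is to verify that $v$ is submodular and monotone. For any $S\subsetneq T$ with $x\notin T$, the marginal gain $v(S\cup\{x\})-v(S)$ splits into a cardinality part, which drops by at least $\delta$ in going from $S$ to $T$ (strict concavity of $\alpha$), and a $B$-part, whose fluctuation between the $S$-side and the $T$-side is at most $2B_{\max}$ since $B\in[0,B_{\max}]$. Because $\delta > 2B_{\max}$, submodularity follows. Monotonicity is immediate: each first difference of $\alpha$ is at least $B_{\max}+1$, which dominates any drop in $B$.

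The core of the reduction is to show that every EFX allocation $(X_1,X_2)$ of $M$ yields a local maximum of $C$. First, I use the large first differences of $\alpha$ to rule out sizes more unbalanced than $\{k,k+1\}$: if $|X_1|=a\le k-1$, then $|X_2\setminus\{g\}|\ge k+1$ for every $g\in X_2$, so $v(X_2\setminus\{g\})-v(X_1)\ge \alpha(k+1)-\alpha(a) > B_{\max}$ (a sum of at least two first differences of $\alpha$, each exceeding $B_{\max}$), and the smaller-bundle agent envies, contradicting EFX. So any EFX allocation satisfies $\{|X_1|,|X_2|\}=\{k,k+1\}$; up to relabelling say $|X_1|=k$ and $|X_2|=k+1$. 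Since $X_1\sqcup X_2=M$, the family $\{X_2\setminus\{g\}:g\in X_2\}$ is precisely the set of $k$-subsets of $M\setminus X_1$, i.e., precisely the Kneser-neighbours of $X_1$. Agent $1$'s EFX constraint $v(X_1)\ge v(X_2\setminus\{g\})$ simplifies, after cancelling the common $\alpha(k)$, to $C(\chi_{X_1})\ge C(\chi_{X_2\setminus\{g\}})$, and so $\chi_{X_1}$ is a local maximum of $C$ on $K(2k+1,k)$. The solution-mapping $g$ therefore returns the characteristic vector of whichever bundle has size $k$. Because $v$ is manifestly monotone by construction, the monotonicity-violation alternative in the definition of \EFX cannot arise and requires no separate handling.

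The main technical obstacle is securing submodularity of $v$: a naively additive cardinality baseline has vanishing second differences, so any oscillation of the circuit-driven bonus $B$ would immediately violate submodularity, while the bonus $B$ is essential to record the value of $C$. The fix is to make $\alpha$ strongly concave with second differences strictly exceeding $2B_{\max}$, so that the concavity of $\alpha$ absorbs the worst-case fluctuation of $B$ on both sides of the submodularity inequality. Once $\delta$ and $K$ are calibrated as above, submodularity holds with slack, the size-forcing argument goes through, and the reduction is complete.
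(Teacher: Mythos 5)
Your proof is correct and takes essentially the same approach as the paper: both encode the circuit $C$ as a bounded perturbation on the size-$k$ level of a concave-in-cardinality submodular baseline, force any EFX allocation to have bundle sizes $k$ and $k+1$, and observe that the size-$k$ bundle's EFX constraints against $\{X_2\setminus\{g\}: g\in X_2\}$ are exactly local optimality in the Kneser graph. The only difference is the parametrization: you use an integer quadratic baseline $\alpha(i)=Ki-\delta\binom{i}{2}$ with additive bonus $+C(X)$, while the paper uses the capped linear baseline $\min(2|X|,2k)$ with perturbation $-2^{-C(X)}$ (rescaled to integers); both calibrate the second differences to dominate the perturbation's fluctuation.
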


\begin{proof}
Our proof of this lemma closely follows the corresponding proof of \citet[Theorem~3.1]{PlautR20-EFX}, with some minor modifications due to the different computational model.
First, we describe the map $f$ taking instances $C$ of \KNESER to instances of \EFX.
We consider a valuation on subsets of $[2k+1]$ given by 
\begin{align*}
v(X) = \begin{cases}
2|X| & \mbox{if }|X|<k\\
2k-2^{-C(X)} & \mbox{if }|X|=k\\
2k & \mbox{if }|X|>k\\
\end{cases}
\end{align*}
Using the description of the circuit $C$, we may in polynomial time construct a boolean circuit computing $v$. This valuation may take non-integer values, but this can be fixed by scaling by a larger power of 2. Scaling will not change anything in the arguments below. We now define $f(C) = ([2],[2k+1],v)$. That is, the \KNESER instance $C$ is mapped to an \EFX instance with $2k+1$ goods and with two agents sharing the same valuation $v$.

We note that $2^{-C(X)}\in (0,1]$,
because $C$ takes values in the natural numbers. This ensures that the valuation $v$ is monotone, because $v(S)$ is seen to be non-decreasing in $|S|$.
Therefore, the only optimal solutions of $f(C)$ are EFX allocations $(X_1,X_2)$. Note by inspection of $v$ that if $(X_1,X_2)$ is EFX, then $|X_1|=k$ and $|X_2|=k+1$  (or $|X_1|=k+1$ and $|X_2|=k$). If we are in the first case then $X_1$ corresponds to a feasible solution of the \KNESER instance $C$. Also any neighbor of $X_1$ in the Kneser graph is of the form $X_2\setminus\{g\}$ for some $g\in X_2$. As $(X_1,X_2)$ is EFX we have that
\begin{align*}
2k-2^{-C(X_1)} & =v(X_1)\\
&\geq v(X_2\setminus\{g\}) =2k-2^{-C(X_2\setminus\{g\})}
\end{align*}
implying that $C(X_1)\geq C(X_2\setminus\{g\})$ for all $g \in X_2$. We conclude that $X_1$ is a local maximum for the instance of \KNESER given by the circuit $C$. Similarly, when $|X_2| = k$, $X_2$ will be a local maximum. As a result, we can define the polynomial-time map $g$ that maps solutions of the \EFX instance to solutions of the \KNESER-instance by
\begin{align*}
g((X_1,X_2),C) = \begin{cases}
X_1 & \mbox{if }|X_1|=k\\
X_2 & \mbox{otherwise}
\end{cases}
\end{align*}
By the discussion above it follows that if $(X_1,X_2)$ is a solution to the \EFX instance, then $g((X_1,X_2),C)$ is an optimal solution to the \KNESER-instance. Therefore, the pair $(f,g)$ constitutes a reduction from \KNESER to \EFX.

Finally, we show that $v$ is submodular. For any $X\subseteq [2k+1]$ and $x\notin X$ we have that
\begin{align*}
v(X\cup \{x\})-v(X)=\begin{cases}
2 & \mbox{if }|X|<k-1\\
2-2^{-C(X\cup\{x\})} & \mbox{if }|X|=k-1\\
2^{-C(X)} & \mbox{if }|X|=k\\
0 & \mbox{if }|X|>k
\end{cases}
\end{align*}
Using that $2^{-C(X)}\in (0,1]$, this shows that $v(X\cup\{x\})-v(X)$ is non-increasing in $|X|.$ Thus, if $Y\subseteq X$ and $x\notin X$, we have that $v(X\cup \{x\})-v(X)\leq v(Y\cup \{x\})-v(Y)$, meaning that $v$ is submodular. 
\end{proof}

\subsubsection{\texorpdfstring{$\FLIP \boldsymbol{\leq} \KNESER$}{\FLIP < \KNESER}}

\citet{JohnsonPY88-PLS} introduced the computational problem \FLIP and proved that it is \PLS-complete. We will now reduce from \FLIP to \KNESER to show that \KNESER, and thus \EFX, are \PLS-hard. In particular, this also establishes the \PLS-completeness of \KNESER, which might be of independent interest.

\begin{definition}[\FLIP]
The \FLIP problem is the following basic \PLS problem.
The instances of \FLIP are boolean circuits. For an instance $C$ with $n$ input nodes $x_0,\dots, x_{n-1}$ and $m$ output nodes $y_0,\dots,y_{m-1}$, the set of feasible solutions is all the possible inputs to the circuit: $F_C=\{0,1\}^n$. For any $x\in \{0,1\}^n$, the neighborhood is all the inputs that can be obtained from $x$ by flipping one bit: $N_C(x) = \{x'\in \{0,1\}^n\colon\Delta(x,x') = 1\}$ where $\Delta(\cdot,\cdot)$ denotes the Hamming distance. The goal is to find a solution that is locally minimal with respect to the objective function defined by $C(x)=\sum_{i=0}^{m-1}y_i\cdot 2^{i}$.
\end{definition}

\begin{lemma}
\FLIP reduces to \KNESER.
\end{lemma}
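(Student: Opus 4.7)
The plan is to simulate each single bit-flip of \FLIP by two consecutive moves in the Kneser graph through a carefully designed "transition vertex", with a layered objective function that rules out spurious local maxima at the many other Kneser vertices.

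Given a \FLIP instance $C_F$ with $n$ input bits, I construct a \KNESER instance on $K(2n+1,n)$. Partition the ground set as $n$ pairs $P_i = \{2i-1,2i\}$ together with an extra element $\star = 2n+1$. For each $x \in \{0,1\}^n$ define the \emph{canonical} vertex $S(x) = \{2i-1+x_i : i \in [n]\}$, which picks exactly one element of each $P_i$. A direct computation shows that the Kneser-neighbours of $S(x)$ are (i) $S(\bar x)$, obtained by dropping $\star$ from $\overline{S(x)}$, and (ii) for each $i$, a \emph{transition vertex} $W_i(x)$ obtained by dropping the unused element of $P_i$ from $\overline{S(x)}$. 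Crucially, the two canonical Kneser-neighbours of $W_i(x)$ turn out to be exactly $S(x)$ and $S(x^{(i)})$, so the two-step walk $S(x) \to W_i(x) \to S(x^{(i)})$ realises the bit-flip in coordinate $i$.

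The objective has two layers. The high-order layer equals $n$ for canonical and transition vertices, and equals the number of pairs $P_i$ containing exactly one element otherwise; a short case analysis shows that any non-canonical, non-transition vertex must contain some pair with zero elements, and dropping a suitable element of its complement produces a Kneser-neighbour with strictly larger "1-pair count", so the high-order layer strictly increases from any such vertex. The low-order layer encodes the original objective: set $\mathrm{mid}(S(x)) = 2(M - C_F(x))$ and $\mathrm{mid}(W_i(x)) = 2(M - \min\{C_F(x), C_F(x^{(i)})\}) - 1$, where $M$ is a constant upper bound on $C_F$. Taking $C_K(T) = N \cdot \mathrm{high}(T) + \mathrm{mid}(T)$ for a sufficiently large $N$, one verifies: from $S(x)$, the move to $W_i(x)$ is improving iff $C_F(x^{(i)}) < C_F(x)$; the move $S(x) \to S(\bar x)$ is a harmless "long jump" that only moves to a \FLIP-better state; and from $W_i(x)$, the better of its two canonical neighbours $\{S(x), S(x^{(i)})\}$ is always strictly improving, so transition vertices are never local maxima. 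It follows that every \KNESER local maximum is a canonical vertex $S(x^*)$ with $C_F(x^{*(i)}) \geq C_F(x^*)$ for all $i$, i.e., $x^*$ is a \FLIP local minimum, and the decoding map simply reads off $x^*$ from the canonical form.

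The main obstacle is the rigidity of the Kneser graph: a short counting argument shows that $K(2k+1,k)$ has no non-degenerate 4-cycles (any closed walk $S_1, S_2, S_3, S_4, S_1$ of pairwise-disjoint consecutive sets forces $S_2 = S_4$, since $S_2, S_4 \subseteq \overline{S_1} \cap \overline{S_3}$ and this intersection has size $|S_1 \cap S_3| + 1 \leq k$), which in particular rules out embedding the hypercube $\{0,1\}^n$ as a subgraph of the Kneser graph and hence any naive one-to-one simulation. The transition vertices $W_i(x)$ circumvent this obstruction by letting two Kneser moves implement one bit-flip, and the layered objective is required to suppress the exponentially many spurious vertices lying outside the canonical/transition structure.
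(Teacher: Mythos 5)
Your construction is essentially the paper's reduction in different notation: writing a Kneser vertex as $uvb$ with pairs $P_i=\{u_i,v_i\}$ and $\star=b$, your canonical vertices $S(x)$ are the paper's $u\overline{u}0$, your transition vertices $W_i(x)$ are its type-(2) vertices $uv1$ with $\Delta(\overline{u},v)=1$, and your high-order ``number of balanced pairs'' layer is (up to sign and the max/min flip) the paper's penalty $M+\Delta(\overline{u},v)$ on all remaining vertices. The argument is correct and matches the paper's step by step, so there is nothing further to add.
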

\begin{proof}
We construct a reduction from \FLIP to the minimization version of \KNESER. The minimization version of \KNESER is seen to be equivalent to its maximization version by negating the output nodes of the original circuit.
Let $C_F$ be an instance of \FLIP. Denote by $p=\poly (|C_F|)$ the length of the feasible solutions of $C_F$. 
The map of instances $f$ now takes $C_F$ to an instance $C_K$ of the \KNESER-problem whose feasible solutions are $F_K=\{x\in \{0,1\}^{2p+1}\colon ||x||_1 = p\}$. A typical feasible solution will be written as $s=uvb$ where $u,v\in\{0,1\}^p$ and $b\in\{0,1\}$. We will use the notation $\overline{u}$ to denote the bitwise negation of $u \in \{0,1\}^p$. The circuit $C_K$ is defined as follows:
\begin{enumerate}
\item $C_{K}(u\overline{u}0)=2\cdot C_{F}(u)$,
\item $C_{K}(uv1)=2\cdot \min(C_{F}(\overline{u}),C_F(v))+1$ if $\Delta(\overline{u},v)=1$,
\item $C_{K}(uvb)=M+\Delta(\overline{u},v)$ otherwise.
\end{enumerate}
Here $M$ denotes a number chosen to be sufficiently large so that it dominates any cost $2 \cdot C_F(w)$. Note that the circuit $C_K$ is well-defined and that it can be constructed in polynomial time given the circuit $C_F$. At a high level, the definition of the cost of a vertex of the third type is meant to ensure that for any such vertex $uvb$, there is a sequence of neighbors with decreasing costs that ends in a vertex of the form $u\overline{u}0$. The costs of the first and second vertex types are then meant to ensure that for a vertex $u\overline{u}0$, there is a sequence of neighbors with decreasing costs that ends in a vertex $w\overline{w}0$ where $w$ is an improving neighbor of $u$ in the original \FLIP-instance.

Below we show that the only local minima of $C_K$ are of the form $u\overline{u}0$ where $u$ is a local minimum for $C_F$. Therefore, upon defining the solution-mapping by $g(uvb)=u$ we have that $(f,g)$ is a reduction from \FLIP to \KNESER.

\textbf{No optimal solutions of type (3).}
If a feasible solution $s=uvb$ is of type (3), then we claim that it must have a neighbor of lower cost. First of all, note that since $s$ is not of type (1) or (2), and since $||s||_1 = p$, it follows that $\Delta(\overline{u},v) \geq 2$. Now, because $\Delta(\overline{u},v) \geq 2 > 0$ and $||uv||_1\leq p$, there must exist an $i$ such that $u_i=v_i=0$. Otherwise one would find that $||uv||_1 >p$, which contradicts $s$ being a feasible solution. Now, let $s' = u'v'b'$, where $u'=\overline{u}$, $b'=\overline{b}$, and $v'_j = \overline{v}_j$ for all $j \neq i$, but $v'_i = v_i = 0$.
We note that $||s'||_1=||\overline{s}||_1-1=(p+1)-1=p$, so $s'$ is a valid vertex in the Kneser graph. Further, we see that $s'$ is a neighbor of $s$, because $s_j ' s_j=0$ for all $j$. If $s'$ is not of type (3), then it has lower cost than $s$ by construction of $C_K$ and choice of $M$. Finally, if $s'$ is of type (3), then the observation that $\Delta(\overline{u'},v')<\Delta(\overline{u},v)$ again yields that $s'$ has lower cost than $s$.

\textbf{No optimal solutions of type (2).}
Suppose $s=uv1$ is of type (2). As $||s||_1=p$ and $\Delta(\overline{u},v)=1$, there is some $i$ with $v_i=0$ and $\overline{u}_i=1$, and $v_j=\overline{u}_j$ for $j\neq i$. This implies that $\sum_i u_i v_i = 0$, and so both $s'=\overline{u}u0$ and $s''=v\overline{v}0$ are neighbors of $s$. Furthermore, by construction of $C_K$, the cost of $s'$ or of $s''$ is strictly less than the cost of $s$.

\textbf{Optimal solutions.}
Consider a feasible solution of the form $u\overline{u}0$. If $u$ is not a local minimum for $C_F$, then let $w$ be an improving neighbor of $u$. As $\Delta(u,w)=1$, there are now two cases to consider. If $u_i=0$ and $w_i=1$ for some $i$, then $s' = \overline{w}u1$ is a type (2) neighbor of lower cost. If $u_i=1$ and $w_i=0$ for some $i$, then $s'=\overline{u}w1$ is a type (2) neighbor of lower cost. Therefore, if $u\overline{u}0$ is a local minimum for $C_K$, then $u$ is a local minimum for $C_F$.
\end{proof}

\begin{corollary}
Let $n\geq 2$ be an integer. Computing an EFX allocation for $n$ identical agents with a submodular valuation function is \PLS-hard.
\end{corollary}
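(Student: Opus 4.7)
The plan is to reduce from the two-identical-agents case established in \cref{theorem:2agentsPLShard} to the case of $n \geq 2$ identical agents. Given an \EFX instance $I = ([2], M, v)$ with submodular valuation $v$ presented by circuit $C$, I would construct an $n$-agent \EFX instance $I' = ([n], M \cup D, v')$ in which $D$ is a set of $n-2$ newly added \emph{dummy} goods, designed so that in any EFX allocation they are forced to be spread one-per-agent and two agents end up sharing precisely $M$. Concretely, choose a large integer $L$ strictly larger than $v(M)$ (for instance $L = 2^{\ell}$, where $\ell$ is the output length of $C$) and define
\[
v'(S) \;=\; v(S \cap M) \;+\; L \cdot |S \cap D|.
\]
The map $S \mapsto v(S \cap M)$ is submodular since $v$ is, and $S \mapsto L \cdot |S \cap D|$ is additive, so their sum $v'$ is submodular; a circuit evaluating $v'$ is built from $C$ in polynomial time.

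The core of the proof is a structural claim: in any EFX allocation $(X_1, \ldots, X_n)$ of $I'$, exactly $n-2$ agents each receive a single dummy good and nothing else, while the remaining two agents partition $M$ and form an EFX allocation of $I$. Writing $k_i = |X_i \cap D|$ and $S_i = X_i \cap M$, I would prove two claims, both exploiting $L > v(M)$. \textbf{(a)} No agent can hold $k_j \geq 2$ dummies: removing one dummy from such an $X_j$ still leaves value at least $L(k_j - 1) \geq L$, so EFX would force $v'(X_i) \geq L$, hence $k_i \geq 1$ for every $i$, contradicting $\sum_i k_i = n-2 < n$. \textbf{(b)} No dummy-holder $j$ (with $k_j = 1$) can simultaneously hold a good $g \in M$: since $\sum_i k_i = n-2$ there exist at least two dummy-less agents, and for any such agent $i$ the EFX condition applied to $g \in X_j$ would demand $v(S_i) = v'(X_i) \geq v'(X_j \setminus \{g\}) \geq L$, again contradicting $L > v(M)$.

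Combining (a) and (b), the $n-2$ dummy-holders each receive exactly one dummy and no $M$-good, so the two dummy-less agents partition $M$ between them; moreover, the EFX condition between those two agents, restricted to $M$-goods, is exactly the EFX condition for $v$ on the original two-agent instance, so their bundles form a valid solution of $I$. The back-mapping $g((X_1, \ldots, X_n), I)$ then just outputs the pair of bundles belonging to the two dummy-less agents, which is clearly polynomial-time computable. The submodularity verification and the EFX inequality chase in (a) and (b) are short routine computations once the right value of $L$ is in hand; the only mild subtlety I anticipate is choosing $L$ and packaging the circuit for $v'$ so that the reduction remains polynomial in the input size, which the bound $L = 2^{\ell}$ handles directly.
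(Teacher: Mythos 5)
Your proof is correct and follows essentially the same route as the paper's: both reductions add $n-2$ high-value dummy goods (the paper uses value $v(M)+1$ where you use $L > v(M)$) to force $n-2$ agents into singleton dummy bundles, leaving the two remaining agents to partition $M$ in a way that is EFX for the original two-agent instance. The only cosmetic difference is that the paper's single inequality $u'(X_i\setminus\{g\}) \geq u'(\{g_j\}) = u(M)+1 > u'(X_1)$ handles your cases (a) and (b) in one step.
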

\begin{proof}
By \cref{theorem:2agentsPLShard} it suffices to produce a reduction from the problem of computing an EFX allocation for two identical agents to the problem of computing an EFX allocation for $n$ identical agents. We sketch this reduction. Let $u\colon 2^M\rightarrow\RR$ denote the common submodular valuation function of the two agents. Construct an EFX-instance with $n$ agents by adding $n-2$ agents and $n-2$ goods, $M' = M\cup\{g_1,\dots, g_{n-2}\}$. Define the valuation function of the $n$ agents to be $u'=\overline{u}+v$ where $\overline{u}\colon 2^{M'}\rightarrow\RR$ is the extension of $u$ given by $\overline{u}(S)=u(S\cap M)$ and where $v\colon 2^{M'}\rightarrow\RR$ is additive given by $v(\{g_i\})=u(M)+1$ for $i=1,\dots, n-2$ and $v(\{g\})=0$ for $g\in M$. One may verify that $\overline{u}$ is submodular, and so that $u'$ is the sum of two submodular valuations and therefore itself submodular.

Let $(X_1,\dots, X_n)$ denote an EFX allocation of this instance. We claim that after permuting the bundles, we may assume that $X_{i+2} = \{g_i\}$ for $i=1,\dots, n-2$ and $X_1\cup X_2 = M$. At least one bundle, say $X_1$, receives no good from $\{g_1,\dots, g_{n-2}\},$ and so $u'(X_1)=u(X_1)\leq u(M).$ Now suppose some other bundle $X_i$ contains some good $g_j$. If $X_i$ contained another good $g$, then
\begin{align*}
    u'(X_i\setminus \{g\})\geq u'(\{g_j\})=u(M)+1>u'(X_1),
\end{align*}
contradicting $(X_1,\dots, X_n)$ being EFX. Hence, $X_i = \{g_j\}$, and the claim follows.
Now, one sees that $(X_1,X_2)$ is an EFX allocation of the original two-agent instance.
\end{proof}

\bigskip
\subsubsection*{Acknowledgements}

We thank all the reviewers of SAGT 2023 for their comments and suggestions that improved the presentation of the paper. In particular, we thank one reviewer for pointing out that \WWL valuations also generalize cancelable valuations.

P.\,W.\,Goldberg was supported by a JP Morgan faculty award. K.\,H\o gh was supported by the Independent Research Fund Denmark under grant no.~9040-00433B. Most of this work was done while he was visiting Oxford thanks to a STIBO IT Travel Grant. A.\,Hollender was supported by the Swiss State Secretariat for Education, Research and Innovation (SERI) under contract number MB22.00026.

\bibliographystyle{plainnat}
\bibliography{EFX_references}

\end{document}